%% file: main.tex
\RequirePackage{amsmath}
\documentclass[runningheads]{llncs}
\input{PackagesNotations}

\input{encoding} 

\usepackage{bbding}

\begin{document}

\title{
Pitts and Intuitionistic Multi-Succedent:\\
Uniform Interpolation for KM
}

\author{Hugo Férée\inst{1}\orcidID{0000-0003-3103-5612}\and  
Ian Shillito\inst{2,3}\orcidID{0009-0009-1529-2679}}

\authorrunning{H. Férée and I. Shillito}

\institute{Université Paris Cité, CNRS, IRIF, F-75013, Paris, France
\and University of Birmingham, Birmingham, United Kingdom
\and University of Melbourne, Melbourne, Australia}

\maketitle
\setcounter{footnote}{0}
\begin{abstract}
\input{abstract.tex}
\keywords{uniform interpolation, formal verification, proof theory, intuitionistic modal logic}
\end{abstract}

\section{Introduction}

\input{introduction}

\section{Preliminaries}\label{sec:prelim}
\input{preliminaries}

\section{A $\sys{G4}$ calculus for $\KM$}\label{sec:calc}
\input{SeqCalc}

\section{Uniform interpolation for $\KM$ \`{a} la Pitts}\label{sec:UI}
\input{PropQuantif}

\section{Algebraic consequence of uniform interpolation}\label{sec:coherence}
\input{Coherence}

\section{Conclusion}
\input{conclusion}

\begin{credits}
\subsubsection{\ackname} 
The authors would like to thank Rosalie Iemhoff for her encouragements in pursuing this project,
Mojtaba Mojtahedi for his counterexample of strong completeness for $\KM$, 
Rodrigo Nicolau Almeida for bringing this logic (back) to our attention,
Sam van Gool for his clarifications on the proof of uniform interpolation for $\LC$,
and Rajeev Gor\'{e} for helpful comments on a late version of our draft.
The second author has been supported by
(1) a UKRI Future Leaders Fellowship, ‘Structure vs Invariants in Proofs’, project reference MR/S035540/1,
and
(2) the Renaissance Philanthropy grant DEEPER.
\end{credits}

\bibliographystyle{splncs04}
\bibliography{bibliography}

\end{document}

%% file: PackagesNotations.tex
\usepackage{graphicx}
\usepackage{tikz}
\usetikzlibrary{positioning,arrows,calc,shapes.geometric}
\tikzset{modal/.style={>=stealth,shorten >=1pt,shorten <=1pt,auto,node distance=0.7cm, semithick}, world/.style={circle,draw,minimum size=0.2cm}, point/.style={circle,draw,inner sep=0.5mm,fill=black},point1/.style={circle,draw,inner sep=0.5mm}, reflexive above/.style={->,loop,looseness=7,in=120,out=60}, reflexive below/.style={->,loop,looseness=7,in=240,out=300}, reflexive left/.style={->,loop,looseness=7,in=150,out=210}, reflexive right/.style={->,loop,looseness=7,in=30,out=330}, itria/.style={draw,isosceles triangle,shape border rotate=270,yshift=-1.45cm, minimum height = 15mm}, empty/.style={inner sep=0.0mm } }
\usepackage{amssymb}
\usepackage{ebproof}
\usepackage{proof}
\usepackage{newunicodechar}
\usepackage{multirow}
\usepackage[hidelinks]{hyperref}
\usepackage{enumerate}
\usepackage{array}
\usepackage{todonotes}

\usepackage{etoolbox}
\newcommand{\qedsym}{\hfill$\blacksquare$}  
\AtEndEnvironment{proof}{\phantom{}\qedsym} 

\usepackage{enumitem}
\setitemize{noitemsep,topsep=2pt,parsep=2pt,partopsep=2pt}

\def\BaseUrl{https://ianshil.github.io/KM} 
\newcommand{\rocqdoc}[1]{\href{\BaseUrl/#1}{\raisebox{-0.6mm}{\includegraphics[height=0.8em]{orocql.png}}}}

\newtheorem{obs}{Observation}

\newcommand{\Bilkova}{B{\'{i}}lkov{\'{a}}}

\newcommand{\goodbox}{\hspace{-.6ex}\text{
    \tikz[baseline=-.6ex, rounded corners=.01ex, line width=.1ex]
    {\draw (-.6ex,-.6ex) rectangle (.6ex,.6ex);}}\kern.2ex}
\renewcommand{\Box}{\goodbox}

\newcommand{\goodcrossedbox}{\hspace{-.6ex}\text{
    \tikz[baseline=-.6ex, rounded corners=.01ex, line width=.1ex]
    {\draw (-.6ex,-.6ex) rectangle (.6ex,.6ex);
      \draw (-.6ex,.6ex) -- (.6ex,-.6ex);
      \draw (-.6ex,-.6ex) -- (.6ex,.6ex);}}\kern.2ex}

\newcommand{\gooddiamond}{\hspace{-.6ex}\text{
    \tikz[baseline=-.6ex, rounded corners=.01ex, rotate=45, line width=.1ex]
    {\draw (-.5ex,-.5ex) rectangle (.5ex,.5ex);}}\kern.2ex}
\renewcommand{\Diamond}{\gooddiamond}

\renewcommand{\phi}{\varphi}

\newcommand{\sys}[1]{\mathsf{#1}}
\newcommand{\IL}{\sys{IPC}} 
\newcommand{\IPC}{\sys{IPC}} 
\newcommand{\LC}{\sys{LC}}
\newcommand{\K}{\sys{K}}
\newcommand{\KM}{\sys{KM}}
\newcommand{\KMH}{\sys{KMH}}
\newcommand{\iK}{\sys{iK}}

\newcommand{\GL}{\sys{GL}}
\newcommand{\iGL}{\sys{iGL}}

\newcommand{\iSL}{\sys{iSL}}
\newcommand{\GfouriSLt}{\sys{G4iSLt}}

\newcommand{\GfourKM}{\hyperlink{seq:GfourKM}{\sys{G4KM}}}
\newcommand{\GfouriP}{\sys{G4iP}}

\newcommand{\Kax}{\hyperlink{ax:K}{\text{K}}}
\newcommand{\KMax}{\hyperlink{ax:KM}{\text{KM}}}
\newcommand{\SLax}{\hyperlink{ax:SL}{\text{SL}}}
\newcommand{\MPax}{\hyperlink{ax:MP}{\text{MP}}}
\newcommand{\PropVar}{{\sf{Prop}}}

\newcommand{\KMAlg}{\mathsf{KMAlg}}
\newcommand{\val}{v}
\renewcommand{\int}[1]{\hat{#1}}

\newcommand{\pvf}[1]{\text{Vars}\,(#1)}
\newcommand{\pvs}[1]{\text{Vars}\,(#1)}

\newcommand{\sysrule}[1]{{\scriptstyle {(\text{#1})}}}
\newcommand{\Ra}{\Rightarrow}

\newcommand{\IdP}{\hyperlink{rule:IdP}{\sysrule{IdP}}}

\newcommand{\BotL}{\hyperlink{rule:BotL}{\sysrule{$\bot$L}}}
\newcommand{\ImpR}{\hyperlink{rule:ImpR}{\sysrule{$\rightarrow$R}}}
\newcommand{\ImpL}{\hyperlink{rule:ImpL}{\sysrule{$\rightarrow$L}}}
\newcommand{\ImpLAnd}{\hyperlink{rule:ImpLAnd}{\sysrule{$\land\!\rightarrow$L}}}
\newcommand{\ImpLOr}{\hyperlink{rule:ImpLOr}{\sysrule{$\lor\!\rightarrow$L}}}
\newcommand{\ImpLImp}{\hyperlink{rule:ImpLImp}{\sysrule{$\rightarrow\!\rightarrow$L}}}
\newcommand{\ImpLBox}{\hyperlink{rule:ImpLBox}{\sysrule{$\Box\!\rightarrow$L}}}
\newcommand{\ImpLVar}{\hyperlink{rule:ImpLVar}{\sysrule{$p\!\rightarrow$L}}}
\newcommand{\contrR}{\hyperlink{rule:contrR}{\sysrule{contrR}}}
\newcommand{\contrL}{\hyperlink{rule:contrL}{\sysrule{contrL}}}
\newcommand{\ImpLDup}{\hyperlink{rule:ImpLDup}{\sysrule{$\rightarrow\rightarrow$L$_{dup}$}}}
\newcommand{\symcut}{\hyperlink{rule:symcut}{\sysrule{Cut}}}
\newcommand{\ImpLImprev}{\hyperlink{rule:ImpLImprev}{\sysrule{$\rightarrow\!\rightarrow$L$^{-1}$}}}
\newcommand{\ImpLImptwo}{\hyperlink{rule:ImpLImptwo}{\sysrule{$\rightarrow\!\rightarrow$L2}}}

\newcommand{\BoxR}{\hyperlink{rule:BoxR}{\sysrule{$\Box$R}}}
\newcommand{\OrR}{\hyperlink{rule:OrR}{\sysrule{$\lor$R}}}
\newcommand{\AndL}{\hyperlink{rule:AndL}{\sysrule{$\land L$}}}
\newcommand{\AndR}{\hyperlink{rule:AndR}{\sysrule{$\land R$}}}
\newcommand{\OrL}{\hyperlink{rule:OrL}{\sysrule{$\lor L$}}}

\newcommand{\openboxes}{\Box^{\scriptscriptstyle{-1}}}
\newcommand{\disjunion}{,}

\newcommand{\ApKM}[1]{\mathsf{A}_{p}\big(#1\big)}
\newcommand{\EpKM}[1]{\mathsf{E}_{p}\big(#1\big)}
\newcommand{\To}{\Rightarrow}

\newcommand{\Ap}{\mathsf{A_p}}
\newcommand{\Ep}{\mathsf{E_p}}
\newcommand{\callAp}[1]{\mathsf{\mathcal{A}_p}(#1)}
\newcommand{\callEp}[1]{\mathsf{\mathcal{E}_p}(#1)}

\usepackage{pdflscape}

\newcommand{\ordering}{\prec}

\newcommand{\pfreeness}{\hyperlink{pfreeness}{\emph{$p$-freeness}}}
\newcommand{\implication}{\hyperlink{implication}{\emph{implication}}}
\newcommand{\uniformity}{\hyperlink{uniformity}{\emph{uniformity}}}

\renewcommand{\uplus}{\GenericError}

%% file: encoding.tex
\newunicodechar{⊙}{\ensuremath{\odot}}
\newunicodechar{⊗}{\ensuremath{\otimes}}
\newunicodechar{□}{\ensuremath{\Box}}
\newunicodechar{Δ}{\ensuremath{\Delta}}
\newunicodechar{δ}{\ensuremath{\delta}}
\newunicodechar{θ}{\ensuremath{\theta}}
\newunicodechar{ϕ}{\ensuremath{\phi}}
\newunicodechar{ψ}{\ensuremath{\psi}}
\newunicodechar{Γ}{\ensuremath{\Gamma}}
\newunicodechar{≺}{\ensuremath{\prec}}
\newunicodechar{∖}{\ensuremath{\setminus}}
\newunicodechar{∧}{\ensuremath{\wedge}}
\newunicodechar{∨}{\ensuremath{\vee}}
\newunicodechar{⊼}{\ensuremath{\barwedge}}
\newunicodechar{⊻}{\ensuremath{\veebar}}
\newunicodechar{₁}{\ensuremath{_1}}
\newunicodechar{₂}{\ensuremath{_2}}
\newunicodechar{₃}{\ensuremath{_3}}
\newunicodechar{⊥}{\ensuremath{\bot}}
\newunicodechar{⊤}{\ensuremath{\top}}
\newunicodechar{∈}{\ensuremath{\in}}
\newunicodechar{⇁}{\ensuremath{\rightharpoondown}}
\newunicodechar{∀}{\ensuremath{\forall}}
\newunicodechar{∃}{\ensuremath{\exists}}
\newunicodechar{λ}{\ensuremath{\lambda}}
\newunicodechar{⊢}{\ensuremath{\vdash}}
\newunicodechar{≡}{\ensuremath{\equiv}}
\newunicodechar{⋀}{\ensuremath{\bigwedge}}
\newunicodechar{⋁}{\ensuremath{\bigvee}}
\newunicodechar{∅}{\ensuremath{\emptyset}}
\newunicodechar{⊎}{\ensuremath{\uplus}}

%% file: abstract.tex
Pitts' proof-theoretic technique for uniform interpolation,
which generates uniform interpolants from terminating sequent calculi,
has only been applied to logics on an intuitionistic basis 
through single-succedent sequent calculi.
We adapt the technique to the intuitionistic multi-succedent setting 
by focusing on the intuitionistic modal logic KM. 
To do this, we design a novel multi-succedent sequent calculus for this logic
which terminates, eliminates cut, and provides a decidability argument.
Then, we adapt Pitts' technique to our calculus
to construct uniform interpolants for KM,
while highlighting the hurdles we overcame.
Finally, by (re)proving the algebraisability of KM,
we deduce the coherence of the class of KM-algebras.
All our results are fully mechanised in the Rocq proof assistant,
ensuring correctness and enabling effective computation of interpolants.

%% file: introduction.tex

Uniform interpolation is a mighty form of interpolation,
ensuring for any formula $\phi$ and variable $p$ the existence 
of \emph{left} and \emph{right uniform interpolants},
respectively denoted $\forall p\phi$ and $\exists p\phi$.
Intuitively,
$\forall p\phi$ is the strongest formula without $p$ that implies $\phi$,
and $\exists p\phi$ the weakest $p$-free formula that is implied by $\phi$.
Technically, these interpolants satisfy the following,
where $\psi$ is a $p$-free formula:
\begin{center}
\begin{tabular}{c @{\hspace{1cm}} c @{\hspace{1cm}} c @{\hspace{1cm}} c}
   $\vdash\forall p\phi\to\phi$  &
   $\dfrac{\vdash\psi\to\phi}{\vdash\psi\to\forall p\phi}$ &
   $\vdash\phi\to\exists p\phi$  &
   $\dfrac{\vdash\phi\to\psi}{\vdash\exists p\phi\to\psi}$ \\
\end{tabular}
\end{center}
Uniform interpolants are propositional formulas,
but their notation is suggestive:
they provide an interpretation of propositional quantifiers inside the logic.


Because of its strength, uniform interpolation is a notoriously difficult property to prove.
Still, a variety of proof techniques are presented in the literature:
model-theoretic~\cite{Vis1996},
universal-algebraic~\cite{GhiZaw97,GooMetTsi2017,KowMet2019},
and proof-theoretic~\cite{Pit92}.
The latter kind of technique,
developed in 1992 for intuitionistic logic $\IL$ by Pitts,
requires a \emph{terminating} sequent calculus: 
a calculus whose naive backward proof search,
i.e.~the process of repetitively applying backward rules 
of the calculus in no specific order,
necessarily comes to a halt.
The proof search tree of a sequent in a terminating calculus is \emph{finite},
and hence becomes data from which uniform interpolants are computed via mutual recursion.


In his proof, Pitts used the \emph{single-succedent} terminating calculus $\GfouriP$ for $\IL$,
which was invented several times through the decades
by Vorob'ev \cite{Vor58}, Dyckhoff~\cite{Dyc92} and Hudelmaier~\cite{Hud93}.
By exploiting the existence of such calculi,
which extend $\GfouriP$ through Iemhoff's methodology~\cite{Iem22},
Pitts' original technique was recently applied to a variety of intuitionistic modal logics~\cite{Iem2019,Gie22},
including the intuitionistic provability logic $\iSL$~\cite{FerGieGooShi24} for which a 
single-succedent terminating calculus $\GfouriSLt$ was defined~\cite{ShiGieGorIem23}.
In parallel, \Bilkova~ported the technique to \emph{multi}-succedent calculi for
\emph{classical} modal logics $\K$, $\sys{T}$ and provability logic $\GL$~\cite{Bil07}.
The shift to a classical basis allowed for technical simplifications,
e.g.~the recursive definition stops being mutual: it focuses on the left uniform interpolant without needing the right one.
Following her footsteps, van der Giessen, Jalali and Kuznets extended
her approach to additional classical modal logics using multi-succedent
but richer, i.e.~nested or labelled, sequents~\cite{GieJalKuz21,GieJalKuz23,GieJalKuz25}.
As it stands, the applicability of Pitts' technique to \emph{multi}-succedent calculi for
logics on an \emph{intuitionistic} basis remains unclear.


By scanning the literature, one easily finds logics with an intuitionistic basis
requiring multi-succedent calculi.
Most famous is the G\"{o}del-Dummett logic $\LC$~\cite{Dum59},
an intermediate logic extending $\IL$ with the linearity axiom $(\phi\to\psi)\lor(\psi\to\phi)$.
This logic is given a crucially multi-succedent terminating calculus in Dyckhoff's work~\cite{Dyc99}.
While uniform interpolation is known for $\LC$, 
as it is locally finite and has Craig interpolation~\cite{GhiZaw2002,Mak77},
it could at least serve as a good example for methodological purposes.
Of yet better interest is the intuitionistic modal logic $\KM$~\cite{Mur14},
which extends $\iSL$~\cite{GieIem20} (and hence $\iK$~\cite{BozDos84} and $\IL$) with
the \emph{Kuznetsov-Muravitsky} axiom $\Box\phi\to(\psi\lor(\psi\to\phi))$.
This logic possesses several multi-succedent calculi~\cite{Dar84,CloGor15},
and lacks a (dis)proof for uniform interpolation~\cite{Kur26}.
From a mathematical viewpoint, 
the interest in $\KM$ lies in its deep ties to $\GL$ and $\IL$:
its lattice of extensions is isomorphic to the lattice of extensions of $\GL$,
and its extension with any non-modal axiom is conservative over the extension of 
$\IL$ with the same axiom.
Additionally, $\KM$ received some attention in computer science~\cite{CloGor15}
in light of its connection to Nakano's ``later" modality~\cite{Nak00},
capturing the notion of \emph{guarded recursion}~\cite{Coq94}.
Given its proximity to $\iSL$ and the recent application of Pitts' technique 
to the latter logic~\cite{FerGieGooShi24},
$\KM$ presents itself as a natural candidate for an investigation on
the applicability of this technique to a combination of multi-succedent sequents and intuitionism.


Unfortunately, existing calculi for $\KM$ are not adequate for this investigation:
the first sequent calculi for $\KM$ given by Darjania~\cite{Dar84} clearly do not terminate,
while Clouston and Gor\'{e}'s calculus~\cite[Section 4]{CloGor15} has complex rules
and uses an alternative syntax for $\KM$.
So, we provide in Section~\ref{sec:calc} a novel multi-succedent terminating calculus for $\KM$.
Our calculus $\GfourKM$ can be obtained from $\GfouriSLt$ in two steps: 
first, port $\GfouriSLt$ to a multi-succedent setting,
making it an extension of the terminating multi-succedent calculus
$\GfouriP'$ for $\IL$~\cite[Section 7]{DycNeg00};
second, modify the implication right rule,
following (Kripke) semantic intuitions,
to capture the characteristic axiom of $\KM$.
We show that naive backward proof search in $\GfourKM$ terminates,
which, together with cut elimination, provides a decidability procedure for $\KM$.
Our syntactic proof of cut elimination uses the termination measure as induction measure,
a now standard methodology for provability logics~\cite{GorRamShi21,GorShi22,ShiGieGorIem23},
but requires a non-trivial refactoring of Dyckhoff and Negri's argument
for the admissibility of contraction~\cite{DycNeg00}.


To obtain uniform interpolation for $\KM$ using $\GfourKM$, 
it remains to adapt Pitts' technique to multi-succedent sequents.
This step mainly consists of a careful rephrasing of the uniform interpolation
property for \emph{sequents} from single- to multi-succedent.
This boils down to restricting enough the properties pertaining to the 
left uniform interpolant $\forall p\phi$ to avoid the capture of
the \emph{constant domain} propositional quantifier, satisfying the axiom
$\forall p(\phi(p)\lor\psi)\to(\forall\phi(p)\lor\psi)$ when $\psi$ is $p$-free.
With this subtlety in mind, we use the adapted technique
to prove uniform interpolation for $\KM$ in Section~\ref{sec:UI}.
We expect this adaptation to be reusable at least 
for $\LC$ and $\KM_{lin}$~\cite{CloGor15}, the combination of $\KM$ and $\LC$.


We directly put our novel result to work in Section~\ref{sec:coherence}
and infer from it
the \emph{coherence} of a class of algebras corresponding to $\KM$,
a consequence of uniform interpolation,
the algebraisability of $\KM$~\cite{Mur14},
and a bridge theorem~\cite{KowMet2019} from abstract algebraic logic~\cite{Fon16}.


Our work fits in the fast-growing literature on the
mechanisation of 
proof theory~\cite{Chap10,DawGor10,DawCloGorTiu14,Lar20},
modal logic~\cite{Doc16,DocBar18,WuGor19,MagPer23,GorRamShi21,AbrDawGor21,Fro25},
intuitionistic modal logic \cite{HagKir22,Val26}
and its proof theory~\cite{GorShi22,ShiGieGorIem23,FerGieGooShi24,BilMagPer25,BilMagPerQua24}.
Indeed, all our results
- spanning axiomatic calculus, Kripke and algebraic semantics, algebraisability, sequent calculus, decidability, admissibility of cut, uniform interpolation -
are formalised in the interactive theorem prover Rocq~\cite{Rocq}:
\url{https://github.com/ianshil/KM}.
On top of ensuring the correctness of these results,
the mechanisation allows for the extraction of executable
programs, e.g.~for effectively computing interpolants.
Throughout this paper, definitions and results are accompanied by a clickable symbol~\rocqdoc~
leading to an online-readable version of their Rocq implementations and proofs.

%% file: preliminaries.tex
We introduce the syntax, axiomatic system, and Kripke semantics of $\KM$.

\subsection{Syntax}

Let~$\PropVar=\{p,q,r\dots\}$ be a countably infinite set of propositional variables on which equality is decidable. Modal formulas (\rocqdoc{KM.Syntax.syntax.html\#form}) are defined as below:
\[
  \varphi ::= p\in\PropVar \mid \bot \mid \varphi \land \varphi \mid \varphi
  \lor \varphi \mid \varphi \to \varphi \mid \Box \varphi
\]
We use the greek letters $\varphi,\psi,\chi,\delta,\dots$ for formulas and $\Gamma,\Delta,\Sigma,\Pi\dots$ for finite multisets of formulas.
We say that $\varphi$ is a \textit{boxed formula} if $\Box$ is its main connective.
We write $\pvf{\phi}$ to denote the set of all propositional variables occurring as subformulas in the formula $\phi$,
and define $\pvs{\Gamma}:=\{q\in\PropVar\mid q\in\pvf{\phi}\text{ for some }\phi\in\Gamma\}$.
The \emph{disjoint sum} of $\Gamma$ and $\Delta$ will be denoted by $\Gamma, \Delta$. Single formulas $\phi$ will also often be coerced implicitly to the multiset singleton $\{\phi\}$.
For a multiset $\Gamma$, we define the multiset $\Box\Gamma := \{\Box\varphi: \varphi\in\Gamma\}$.
By $\openboxes\Gamma$ we mean the multiset
$\{\gamma\in\Gamma\mid\gamma\text{ is not a boxed formula}\}\cup\{\gamma\mid\Box\gamma\in\Gamma\}$~(\rocqdoc{KM.Sequent.Environments.html\#open_box}).
We write $\bigvee\Gamma$ for the disjunction of all the elements in $\Gamma$~(\rocqdoc{KM.GHC.properties.html\#list_disj}), 
where $\Gamma$ is a finite multiset.

\subsection{Axiomatic system}

We introduce an axiomatic system manipulating \emph{consecutions}, i.e.~expressions of the form $\Gamma\vdash\varphi$ where $\Gamma$ is a \emph{set} of formulas.

The generalised Hilbert calculus $\KMH$ for~$\KM$ extends the one for intuitionistic logic $\IL$
with the modal axioms and rules displayed in Figure~\ref{fig:Hilbert}.
It notably extends the intuitionistic modal logic~$\iSL$~\cite{ShiGieGorIem23},
itself an extension of $\iGL$~\cite{GieIem21},
with the Kuznetsov-Muravitsky axiom~$\Box\varphi\to(\psi\lor(\psi\to\phi))$,
and therefore proves some notable axioms:
\begin{center}
\begin{tabular}{c@{\hspace{1cm}}c@{\hspace{1cm}}c}
    $(4)\;\;\Box\phi\to\Box\Box\phi$ & $\text{(GL)}\;\;\Box(\Box\phi\to\phi)\to\Box\phi$ & $\text{(CP)}\;\;\phi\to\Box\phi$ \\
\end{tabular}
\end{center}
The strength of the Completeness Principle (CP) allows us to replace the $\sysrule{Nec}$ rule
with its variant where the premise is $\emptyset\vdash\phi$, and obtain
an equivalent logic~(\rocqdoc{KM.GHC.same_calcs.html\#gKMH_id_KMH}).
We write $\Gamma\vdash_{\KMH}\varphi$ if $\Gamma\vdash\varphi$ is provable in $\KMH$.

\begin{figure}[t]
\begin{center}
\small
\begin{tabular}{r @{\hspace{0.3cm}} l@{\hspace{0.7cm}}l}
$\Kax$ & \hypertarget{ax:K}{$\Box(\varphi\rightarrow\psi)\rightarrow(\Box\varphi\rightarrow\Box\psi)$} & \multirow{3}{4em}{$
\inferLineSkip=3pt
\infer[\scriptstyle \sysrule{Nec}]{\Gamma\vdash\Box\phi}{\Gamma\vdash\phi}
$} \\

$\SLax$ & \hypertarget{ax:SL}{$(\Box\varphi\rightarrow\varphi)\rightarrow\varphi$} & \\
$\KMax$ & \hypertarget{ax:KM}{$\Box\varphi\to(\psi\lor(\psi\to\phi))$} \\
\end{tabular}
\end{center}

\begin{center}
\begin{tabular}{c@{\hspace{1cm}}c@{\hspace{0.8cm}}c}
$
\inferLineSkip=3pt
\infer[\scriptstyle\sysrule{Ax}]{\Gamma\vdash\phi}{\varphi\text{ is an instance of an axiom }}
$ & 
$
\inferLineSkip=3pt
\infer[\scriptstyle\sysrule{El}]{\Gamma\vdash\phi}{\phi\in\Gamma}
$ &
\hypertarget{ax:MP}{$
\inferLineSkip=3pt
\infer[\scriptstyle\sysrule{MP}]{\Gamma\vdash\psi}{
	\Gamma\vdash\varphi
	&
	\Gamma\vdash\varphi\rightarrow\psi}
$}
 \\
\end{tabular}
\end{center}

\caption{Generalised Hilbert calculus $\KMH$ for $\KM$ (\rocqdoc{KM.GHC.KMH.html}).}
\label{fig:Hilbert}
\end{figure}

\subsection{Kripke semantics}

We now present the Kripke semantics for $\KM$~\cite{Mur14} which we use to
provide intuitions for the rules of our sequent calculus.

The Kripke semantics of $\KM$ restricts the one for $\iSL$ on the class of models:
the class of models for $\KM$ is strictly contained in the one for $\iSL$.
The models for $\KM$ are defined below,
where $\mathcal{P}(W):=\{U \mid U\subseteq W\}$.

\begin{definition}[\rocqdoc{KM.Kripke.kripke_sem.html\#model}]
A model $\mathcal M$ is a tuple $(W,\leq,R,I)$ satisfying the following:
$(W,\leq)$ is a preordered set;
$R$ is equal to the irreflexive part of $\leq$, noted $<$, i.e.~$R =\,<$;
$R$ is converse well-founded;
and $I:\PropVar\rightarrow\mathcal P(W)$ is a \emph{persistent} interpretation function such that
$$\forall v,w\in W.\,\forall p\in\PropVar.\,(w\leq v\;\;\rightarrow\;\; w\in I(p)\;\;\rightarrow\;\; v\in I(p))$$
\end{definition}

Although the models of $\iSL$ require that $R\subseteq\,<$,
a stricter condition is imposed for $\KM$ by enforcing the equality between $R$ and $<$, thereby making $R$ and $\leq$ interdefinable.
In other words, whenever we travel along the modal relation we know that we effectively
perform a jump to a \emph{strict} intuitionistic successor.

\begin{definition}[\rocqdoc{KM.Kripke.kripke_sem.html\#forces}]
Given a model $\mathcal M=(W,\leq,R,I)$, we define the forcing relation $\mathcal M,w\Vdash \phi$ between a world $w\in W$ and a formula $\phi$ as follows:
\begin{center}
\begin{tabular}{l@{\hspace{0.3cm}}c@{\hspace{0.3cm}}l}
$\mathcal M,w\Vdash p$ & $:=$ & $w\in I(p)$\\
$\mathcal M,w\Vdash\bot$ & $:=$ & Never \\ 
$\mathcal M,w\Vdash\phi\land\psi$ & $:=$ & $\mathcal M,w\Vdash\phi$ and $\mathcal M,w\Vdash\psi$\\
$\mathcal M,w\Vdash\phi\lor\psi$ & $:=$ & $\mathcal M,w\Vdash\phi$ or $\mathcal M,w\Vdash\psi$\\
$\mathcal M,w\Vdash\phi\rightarrow\psi$ & $:=$ & for all $v\geq w$, $\mathcal M,v\Vdash\phi$ implies $\mathcal M,v\Vdash\psi$\\
$\mathcal M,w\Vdash\Box\phi$ & $:=$ & for all $v$, $wRv$ implies $\mathcal M,v\Vdash\phi$\\
\end{tabular}
\end{center}
We abbreviate the negation of $\mathcal M,w\Vdash \phi$ by $\mathcal M,w\not\Vdash \phi$,
and write $\mathcal M, w \Vdash \Gamma$ for $\forall \varphi \in \Gamma.\,\mathcal M, w \Vdash \varphi$.
We define the local consequence \emph{(\rocqdoc{KM.Kripke.kripke_sem.html\#loc_conseq})} as follows:
\begin{center}
\begin{tabular}{l@{\hspace{1cm}}c@{\hspace{1cm}}l}
$\Gamma\models\varphi$ & iff & $\forall\mathcal M.\forall w.\,(\mathcal M,w\Vdash\Gamma\;\;\;\text{implies}\;\;\;\mathcal M,w\Vdash\varphi)$\\
\end{tabular}
\end{center}
\end{definition}

As expected, the Kripke semantics for intuitionistic logic, i.e.~\emph{persistence},
is preserved in our semantics for $\KM$.

\begin{lemma}[Persistence \rocqdoc{KM.Kripke.kripke_sem.html\#Persistence}]\label{lem:pers}
For any model $\mathcal M=(W,\leq,R,I)$, formula $\varphi$ and points $w,v\in W$, if $w\leq v$ and $\mathcal M,w\Vdash\varphi$ then $\mathcal M,v\Vdash\varphi$.
\end{lemma}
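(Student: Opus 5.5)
We argue by structural induction on $\varphi$, proving for all $w,v$ at once that $w\leq v$ and $\mathcal M,w\Vdash\varphi$ together imply $\mathcal M,v\Vdash\varphi$. Only the base case uses the persistence condition on $I$; the other cases use transitivity of $\leq$ and the equality $R=\,<$.

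The base case $\varphi=p$ is exactly the persistence requirement on $I$ in the definition of a model, and $\bot$ is never forced so there is nothing to show. The cases $\varphi\land\psi$ and $\varphi\lor\psi$ follow directly from the induction hypotheses on the components, because forcing of these connectives is defined pointwise at the same world. For $\varphi\to\psi$ we do not even need the induction hypothesis. Suppose $w\leq v$ and $w\Vdash\varphi\to\psi$, and take any $u\geq v$. Transitivity of the preorder gives $u\geq w$, so if $u\Vdash\varphi$ then $u\Vdash\psi$.

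The only case that differs from plain intuitionistic logic is $\Box\varphi$. Assume $w\leq v$, $w\Vdash\Box\varphi$, and $vRu$; we must show $u\Vdash\varphi$. Since $R=\,<$, we have $v\leq u$ and $u\not\leq v$. Transitivity of $\leq$ then gives $w\leq u$. We also need $u\not\leq w$: if $u\leq w$ held, then $u\leq w\leq v$ would give $u\leq v$, a contradiction. Hence $w<u$, that is $wRu$, and $w\Vdash\Box\varphi$ yields $u\Vdash\varphi$. Thus $v\Vdash\Box\varphi$.

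The only delicate point is this inclusion ${\leq}\circ R\subseteq R$: one has to check that strictness is preserved when composing with a non-strict step, which uses that $<$ is the irreflexive part of a preorder (here, $u\leq w$ would force $u\leq v$). Converse well-foundedness of $R$ plays no role in this lemma.
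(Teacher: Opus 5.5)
Your structural induction is correct and is the standard argument for this lemma; the paper gives no proof in the text and defers to its Rocq development. You rightly single out the $\Box$ case as the only non-routine one, but note that your step ``$vRu$ gives $u\not\le v$'' relies on reading the ``irreflexive part'' of $\le$ as $x\le y\wedge y\not\le x$. Read literally, it is $x\le y\wedge x\ne y$, so $vRu$ only gives $v\ne u$, and you must rule out $u=w$ another way: either by converse well-foundedness ($u=w$ would give the cycle $wRvRw$), or by the induction hypothesis ($wRv$ gives $v\Vdash\varphi$, which transfers to $w$ since $v\le w$). Under that literal reading, your closing remark that converse well-foundedness plays no role holds only with the second fix.
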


\begin{obs}\label{obs:splitsucc}
A striking feature of this semantics is the ability it gives the language
to distinguish reflexive intuitionistic jumps and jumps to strict intuitionistic successors,
an impossible feat in $\IL$ or even in $\iSL$.
Indeed, by jumping to a strict successor $v$ of $w$ all boxed formulas
forced in it become \emph{unboxed} in $v$,
while this unboxing is not generally performed by jumping to $w$ itself.
Joined with persistence, this observation gives that 
$\mathcal M,w\Vdash\Gamma$ and $w<v$ entail $\mathcal M,v\Vdash\openboxes\Gamma$.
\end{obs}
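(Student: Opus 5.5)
The claim in Observation~\ref{obs:splitsucc} is that $\mathcal M,w\Vdash\Gamma$ and $w<v$ together give $\mathcal M,v\Vdash\openboxes\Gamma$. The plan is to unfold the definition of $\openboxes\Gamma$ and check each of its elements at $v$, splitting on how that element arises. By definition, $\openboxes\Gamma$ is the multiset union of two parts. The first part is $\{\gamma\in\Gamma\mid\gamma\text{ is not a boxed formula}\}$. The second part is $\{\gamma\mid\Box\gamma\in\Gamma\}$. Since forcing a multiset only means forcing each of its members, multiplicities play no role. It therefore suffices to take an arbitrary $\delta\in\openboxes\Gamma$ and show $\mathcal M,v\Vdash\delta$.

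\emph{First case: $\delta\in\Gamma$ is not boxed.} From $\mathcal M,w\Vdash\Gamma$ we get $\mathcal M,w\Vdash\delta$. The relation $<$ is the irreflexive part of the preorder $\leq$, so $w<v$ yields $w\leq v$. Persistence (Lemma~\ref{lem:pers}) then gives $\mathcal M,v\Vdash\delta$. The hypothesis that $\delta$ is not boxed is not actually used here: persistence applies to every formula.

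\emph{Second case: $\delta=\gamma$ with $\Box\gamma\in\Gamma$.} From $\mathcal M,w\Vdash\Gamma$ we get $\mathcal M,w\Vdash\Box\gamma$. By the forcing clause for $\Box$, this means every $R$-successor of $w$ forces $\gamma$. In models of $\KM$ we have $R=\,<$, so $w<v$ means exactly $wRv$, and hence $\mathcal M,v\Vdash\gamma$.

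The second case is the only place where anything specific to $\KM$ is needed, namely the model condition $R=\,<$. Under the weaker $\iSL$ condition $R\subseteq\,<$, the step from $w<v$ to $wRv$ would be unavailable. Beyond this identification, I expect no real obstacle; the argument is a direct combination of Lemma~\ref{lem:pers} with the semantic clause for $\Box$.
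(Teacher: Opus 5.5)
Your proof is correct and follows the same route as the paper. The paper's justification is the same combination: the $\Box$ forcing clause under the $\KM$ condition $R=\,<$ unboxes the boxed members of $\Gamma$ at any strict successor $v$, and persistence (Lemma~\ref{lem:pers}) carries the rest over. Your case split on the two parts of $\openboxes\Gamma$ just spells out what the paper states in a single sentence.
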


While this semantics will be handy to explain the rules of our calculus,
it is not tightly corresponding to the logic:
$\KM$ and the local consequence coincide on theorems ($\emptyset\models\phi$ if and only if $\emptyset\vdash_{\KMH}\phi)$,
but $\KM$ is not \emph{strongly complete} w.r.t.~the semantics.
Indeed, we have $\Gamma\not\vdash_{\KMH} p_0$ while $\Gamma\models p_0$, where
$\Gamma$ is the infinite set $\{\Box p_{n+1}\to p_n\mid n\in\mathbb N\}$.%
\footnote{We thank Mojtaba Mojtahedi for designing, and sharing with us, this set.}
As for most provability logics, this failure of strong completeness boils down to
the tendency converse well-founded frames have to make the local consequence non-compact,
as is known for $\GL$ and $\iGL$~\cite[Section 3.3]{Ver24}.
But for the case of $\KM$ (and $\iSL$), the argument does not hold using the usual infinite set
$\{\Diamond p_0\}\cup\{\Box(p_n\to\Diamond p_{n+1})\mid n\in\mathbb N\}$.

%% file: SeqCalc.tex
In this section we introduce our \emph{multi-succedent} sequent calculus for $\KM$
and prove several major results for it:
termination of naive backward proof search,
cut elimination,
and equivalence with $\KMH$.

As is standard in structural proof theory, 
sequents are expressions of the shape $\Gamma\Ra\Delta$
where the \emph{antecedent} $\Gamma$ and the \emph{succedent} $\Delta$
are multisets of formulas.
We present the calculus $\GfourKM$ in Figure~\ref{fig:iseq-pc}.

\hypertarget{seq:GfourKM}{\begin{figure}[t]
\centering
{\small
\hypertarget{rule:BotL}{$\begin{prooftree}
\infer0[$\BotL$]{\Gamma,\bot\Ra\Delta}
\end{prooftree}$}
\quad
\hypertarget{rule:IdP}{$\begin{prooftree}
\infer0[$\IdP$]{\Gamma,p\Ra p,\Delta}
\end{prooftree}$}
\quad
\hypertarget{rule:AndL}{$\begin{prooftree}
\hypo{\Gamma,\varphi,\psi\Ra\Delta}
\infer1[$\AndL$]{\Gamma,\varphi\land\psi\Ra\Delta}
\end{prooftree}$}
\\[0.5cm]
\hypertarget{rule:AndR}{$\begin{prooftree}
\hypo{\Gamma\Ra\varphi,\Delta}
\hypo{\Gamma\Ra\psi,\Delta}
\infer2[$\AndR$]{\Gamma\Ra\varphi\land \psi,\Delta}
\end{prooftree}$}
\quad
\hypertarget{rule:OrL}{$\begin{prooftree}
\hypo{\Gamma,\varphi\Ra\Delta}
\hypo{\Gamma,\psi\Ra\Delta}
\infer2[$\OrL$]{\Gamma,\varphi\lor\psi\Ra\Delta}
\end{prooftree}$}
\quad
\hypertarget{rule:OrR}$\begin{prooftree}
\hypo{\Gamma\Ra\varphi,\psi,\Delta}
\infer1[$\scriptstyle{\OrR}$]{\Gamma\Ra\varphi\lor\psi,\Delta}
\end{prooftree}$}
\\[0.5cm]
$\hypertarget{rule:ImpLAnd}{\begin{prooftree}
\hypo{\Gamma,\varphi\rightarrow (\psi\rightarrow\chi)\Ra\Delta}
\infer1[$\scriptstyle\ImpLAnd$]{\Gamma,(\varphi\land\psi)\rightarrow\chi\Ra\Delta}
\end{prooftree}}$
\,
\hypertarget{rule:ImpLOr}{$\begin{prooftree}
\hypo{\Gamma,\varphi\rightarrow\chi,\psi\rightarrow\chi\Ra\Delta}
\infer1[$\ImpLOr$]{\Gamma,(\varphi\lor\psi)\rightarrow\chi\Ra\Delta}
\end{prooftree}$}
\quad
\hypertarget{rule:ImpLVar}$\begin{prooftree}
\hypo{\Gamma,p,\varphi\Ra\Delta}
\infer1[$\ImpLVar$]{\Gamma,p,p\rightarrow\varphi\Ra\Delta}
\end{prooftree}$
\\[0.5cm]
\hypertarget{rule:BoxR}{\begin{prooftree}
\hypo{\openboxes Γ \disjunion □ \phi  \Ra \phi } %
\infer1[$\scriptstyle{\BoxR}$]{Γ \Ra\Box\phi,\Delta} %
\end{prooftree}}
\quad
\hypertarget{rule:ImpLBox}{\begin{prooftree}\hypo{\openboxes Γ \disjunion  \Box\phi \disjunion  \psi \Ra \phi }
\hypo{ Γ \disjunion  \psi \Ra \Delta }
\infer2[$\ImpLBox$]{Γ \disjunion  \Box\phi \rightarrow \psi \Ra \Delta}
\end{prooftree}}
\\[0.5cm]
\hypertarget{rule:ImpR}{$\begin{prooftree}
\hypo{\Gamma,\varphi\Ra\psi,\Delta}
\hypo{\openboxes\Gamma,\varphi\Ra\psi}
\infer2[$\ImpR$]{\Gamma\Ra\varphi\rightarrow\psi,\Delta}
\end{prooftree}$}
\\[0.5cm]
\hypertarget{rule:ImpLImp}{$\begin{prooftree}
\hypo{\Gamma,\chi\rightarrow\psi,\varphi\Ra\Delta, \chi}
\hypo{\openboxes\Gamma,\chi\rightarrow\psi,\varphi\Ra\chi}
\hypo{\Gamma,\psi\Ra\Delta}
\infer3[$\ImpLImp$]{\Gamma,(\varphi\rightarrow\chi)\rightarrow\psi\Ra\Delta}
\end{prooftree}$}
\caption[]{The sequent calculus~$\GfourKM$ (\rocqdoc{KM.Sequent.Sequents.html\#Provable}).}
  \label{fig:iseq-pc}
\end{figure}}
The modal rules of $\GfourKM$ are straightforwardly adapted from $\GfouriSLt$~\cite{ShiGieGorIem23}
to the multi-succedent context, 
and most non-modal rules are directly taken from $\GfouriP'$.
See in particular~\cite[Section 2.4]{ShiGieGorIem23} for an explanation of the rule~$\ImpLBox$.
Still, two essential - and interdependent - features of our calculus need commenting:
the presence of multisets on the right-hand side of sequents,
and the surprising shape of the implication right rule.
Both are targeted at proving the axiom $\KMax$
and hence the sequent $\Box p\Ra q\lor(q\to p)$.

Without a rule for disjunction on the right
preserving both disjuncts we cannot prove this sequent:
this calls for multi-succedent sequents,
giving the rule $\OrR$ of our calculus.
Such sequents can be used for $\IL$, as first shown by Maehara~\cite{Mae54},
and popularised by Dragalin~\cite{Drag88}.
Furthermore, this approach can be ported to the terminating calculus $\GfouriP$ to
obtain a multi-succedent terminating calculus $\GfouriP'$ for $\IL$~\cite[Section 7]{DycNeg00}.
This calculus has been used as basis for intuitionistic modal logics (with diamond)~\cite{DalGreOli21}, 
an example we follow for our calculus by straightforwardly adapting $\GfouriSLt$'s modal rules~\cite{ShiGieGorIem23} to multi-succedent sequents,
and providing modifications to the rules $\ImpR$ and $\ImpLImp$.

The multi-succedent $\OrR$ reduces $\Box p\Ra q\lor(q\to p)$ to $\Box p\Ra q,q\to p$, but
the implication right rule of $\GfouriP'$ presented below
does not help prove the latter sequent,
as it forces us to delete $q$ on the right.
$$
\infer{\Gamma\Ra\Delta,\phi\to\psi}{\Gamma,\phi\Ra\psi}
$$
The upward deletion of $\Delta$ in this rule,
notably preventing the provability of the excluded middle $\phi\lor\neg\phi$,
is semantically justified as the rule, read upwards, 
corresponds to a jump to an \emph{arbitrary} intuitionistic successor.
We could hope for a more detailed semantic analysis in the rule,
by distinguishing \emph{reflexive} and \emph{strict-successor} jumps.
The rule below attempts at doing just this.
$$
\infer{\Gamma\Ra\Delta, \phi\to\psi}{
    \Gamma,\phi\Ra\Delta, \psi
    &
    \Gamma,\phi\Ra\psi}
$$
Unfortunately, this is pointless: 
this last rule is equivalent to the previous one, as the left premise
is provable from the right one via weakening.
This issue is nothing but an expression of the well-known
inability of intuitionistic logic to syntactically distinguish 
reflexive from strict-successor jumps.

In $\KM$, the story is different:
by the semantic Observation~\ref{obs:splitsucc},
we can syntactically distinguish the two.
Our rule~$\ImpR$ leverages this insight.
$$
\infer[\ImpR]{\Gamma\Ra\Delta, \phi\to\psi}{
    \Gamma,\phi\Ra\Delta, \psi
    &
    \openboxes\Gamma,\phi\Ra\psi}
$$
The reflexive jump corresponds to the left premise,
as we preserved $\Delta$,
and the strict-successor jump corresponds to the right premise,
as we obtained $\openboxes\Gamma$ and deleted $\Delta$.
With the rule $\ImpR$, we can finally prove our sequent.
$$
\infer[\ImpR]{\Box p\Ra q,q\to p}{
    \infer[\IdP]{\Box p,q\Ra q,p}{}
    &
    \infer[\IdP]{p,q\Ra p}{}
}
$$
The modifications made to $\ImpR$ forced us to change the rule $\ImpLImp$ accordingly:
the two leftmost premises of $\ImpLImp$ are inspired from the premises of $\ImpR$.

We contrast our new calculus with the existing ones for $\KM$.
Darjania's multi-succedent calculi~\cite{Dar84} follow a similar intuition to ours (see also~\cite{SavSha21}),
splitting the implication right into two rules.
Contrarily to ours, they are not targeted at termination 
as evidenced by the explicit structural rules it contains for the one,
and the repetition of principal formulas in premises for the other.
Therefore, these calculi do not fit our bill: proving uniform interpolation \`{a} la Pitts.
Clouston and Gor\'{e}'s calculus~\cite[Section 4]{CloGor15} for $\KM$ terminates,
but their approach is different:
instead of using the expressivity of the modality to create the case distinction
in our rule $\ImpR$, they split the implication \emph{connective} into
a reflexive $\to$ and a strict successor $\twoheadrightarrow$ version.
The complexity of their rules 
and the semantic nature of their cut-free completeness proof, via countermodel construction,
pushed us to favour our calculus,
for which we could provide a syntactic proof of \emph{cut elimination}.


We can now show that naive backward proof search terminates in $\GfourKM$.
This is easily obtained by finding a well-founded ordering on sequents decreasing on each instance
of a rule from conclusion to premise.
For this, we reuse the adaptation to $\iSL$ of Dyckhoff's notion of weight of a formula~\cite{Dyc92},
and simply adapt the ordering induced by it from $\GfouriSLt$~\cite{ShiGieGorIem23} to multi-succedent sequents.

\begin{definition}[\rocqdoc{KM.Sequent.syntax_facts.html\#weight}]\label{def:weight}
The \textit{weight} $w(\varphi)$ of a formula is the sum of its size and of its number of conjunctions.
In other words, every symbol in a formula accounts for~$1$ in the weight, except for conjunctions which account for~$2$.
\end{definition}

\begin{definition}[Ordering on sequents \rocqdoc{KM.Sequent.Order.html\#env_pair_order}]\label{def:ordering}
    The ordering on sequents denoted by  $(\Gamma\Ra \Delta) \ordering (\Gamma' \Ra \Delta')$
    is defined as the Dershowitz–Manna multiset-ordering between $(\Gamma, \Delta, \Delta)$ and $(\Gamma', \Delta', \Delta')$ induced by the well-founded ordering on formulas induced by Definition~\ref{def:weight}.
\end{definition}

Concretely, formulas from a multiset can repeatedly be replaced by finitely-many strictly smaller formulas in order to obtain a smaller multiset. Such an ordering is well-known to be well-founded since the ordering on formulas is.
In the following, we implicitly use this ordering when performing a well-founded induction on a sequent.

It is easy to see that this ordering decreases between the conclusion and each premise of each rule from Figure~\ref{fig:iseq-pc} as a formula from the conclusion is either preserved in the premises, or replaced with several, strictly smaller formulas.
\begin{proposition}
    For any instance of a $\GfourKM$-rule with premises $s_1, \dots, s_n$ and conclusion $s$, 
    we have $s_i \ordering s$ for any $1 \le i \le n$.
\end{proposition}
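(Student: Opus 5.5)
The plan is a case analysis over the rules of Figure~\ref{fig:iseq-pc}. In each case I will exhibit the Dershowitz–Manna step directly. Write $M(\Gamma\Ra\Delta)$ for the multiset $(\Gamma,\Delta,\Delta)$. It suffices to show that $M(s_i)$ arises from $M(s)$ by removing a nonempty sub-multiset $X$ and adding a multiset $Y$ in which every formula is strictly lighter, in the sense of Definition~\ref{def:weight}, than some formula of $X$.

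The principal formula (or its copies) always goes into $X$, which makes $X$ nonempty; for the axioms $\BotL$ and $\IdP$ there are no premises. All other parts of the conclusion fall into three kinds:
\begin{itemize}
\item parts kept verbatim, which are untouched;
\item parts deleted upwards (typically $\Delta$, in $\BoxR$, $\ImpLBox$ and the right premise of $\ImpR$), which are put in $X$ with nothing added;
\item parts replaced by $\openboxes\Gamma$, where each $\Box\gamma\in\Gamma$ is put in $X$ and $\gamma$, of smaller weight, in $Y$.
\end{itemize}

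The weight checks go as follows.
\begin{itemize}
\item \emph{Purely propositional rules.} These are immediate since subformulas are lighter. In $\AndR$ the two copies of $\phi\land\psi$ in $M(s)$ are each replaced by a copy of $\phi$ (resp.\ $\psi$). In $\OrR$ the two copies of $\phi\lor\psi$ are replaced by $\phi,\psi$ and $\phi,\psi$.
\item \emph{$\ImpLAnd$.} This is where Definition~\ref{def:weight} is needed: $w(\phi\to(\psi\to\chi))=w(\phi)+w(\psi)+w(\chi)+2$, while $w((\phi\land\psi)\to\chi)=w(\phi)+w(\psi)+w(\chi)+3$, because the conjunction counts for $2$.
\item \emph{$\ImpLOr$ and $\ImpLVar$.} Here $\phi\to\chi$ and $\psi\to\chi$ are each lighter than $(\phi\lor\psi)\to\chi$, and $\phi$ is lighter than $p\to\phi$.
\item \emph{$\ImpLImp$.} Every new formula $\chi\to\psi$, $\phi$, $\chi$ (and $\psi$ in the third premise) is lighter than $(\phi\to\chi)\to\psi$. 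Note that $\chi$ occurs twice in $M$ of the first two premises, which is harmless because the one removed formula may be replaced by arbitrarily many lighter ones.
\item \emph{$\ImpLBox$.} The formulas $\Box\phi$, $\psi$ and the doubled succedent $\phi$ are all lighter than $\Box\phi\to\psi$.
\end{itemize}

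The step I expect to be the real obstacle, and the reason the succedent is counted twice in Definition~\ref{def:ordering}, is the rules that \emph{keep} a copy of the principal formula's material on the left while also producing something on the right.
\begin{itemize}
\item \emph{$\BoxR$.} $M(s)$ contains $\Box\phi$ twice, but the premise contains $\Box\phi$ on the left and $\phi$ twice on the right. I therefore match one copy of $\Box\phi$ with the antecedent $\Box\phi$, unchanged, and replace the other copy by $\phi,\phi$. With a single count of the succedent, the premise would contain $\Box\phi$ together with the new $\phi$ and the ordering would not decrease.
\item \emph{$\ImpR$.} The left premise replaces the two copies of $\phi\to\psi$ by $\phi,\psi$ and $\psi$. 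The right premise does the same and additionally opens boxes in $\Gamma$ and deletes $\Delta$.
\end{itemize}
Once these two cases are written out carefully, the proposition follows. The whole argument is a routine, if tedious, verification, which the Rocq development checks rule by rule.
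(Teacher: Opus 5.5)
Your proposal is correct and follows essentially the same approach as the paper: a rule-by-rule check of the Dershowitz--Manna step, where the two cases that need care are $\ImpLAnd$ (the conjunction's weight of $2$ makes $(\varphi\land\psi)\to\chi$ heavier than $\varphi\to(\psi\to\chi)$) and $\BoxR$ (counting the succedent twice lets one copy of $\Box\varphi$ be traded for $\varphi,\varphi$). Your extra detail is accurate, with one small overstatement: $\ImpR$ does not belong among the obstacles, since it decreases even if the succedent is counted only once.
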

\begin{proof}
  An interesting case is the~$\ImpLAnd$-rule:
  it requires that $w(\varphi\rightarrow(\psi\rightarrow\chi))<w((\varphi\land\psi)\rightarrow\chi)$,
  which holds as a conjunction adds~$2$ to the weight of a formula.
  Another one is the rule~$\BoxR$ which may not delete any formula from the sequent if $\Delta$ is empty,
  but moves $\Box \varphi$ to the left-hand-side of the sequent while keeping a copy of $\varphi$ on the right.
  This is taken into account in the ordering by counting twice multisets on the right, so that
  $(\{\Box \varphi\}\Ra \{\varphi\}) \ordering (\emptyset \Ra \{\Box \varphi \})$.
  \end{proof}

\begin{remark}
The multiset-ordering may be too strong of a tool here, as each formula is
only replaced by at most three smaller ones, not arbitrarily many.
Then if we define the weight of a sequent $\Gamma \Ra \Delta$ as $\Sigma\{3^{w(\varphi)}|\varphi  \in \Gamma, \Delta, \Delta\}$, where $w(\varphi)$ is the weight of a formula defined as above, we obtain 
a well-founded ordering that is strictly contained in $\ordering$,
still compatible with~$\GfourKM$
and that gives an explicit upper-bound for the depth of a proof tree.

However, the more general ordering $\prec$ (or at least the ordering above with a higher constant than~$3$)
still is useful when proving admissibility results for~$\GfourKM$, or when defining and proving uniform interpolation.
\end{remark}

Now, as the applicability of each rule is decidable and only finitely many rules can be applied on
a given sequent, the well-foundedness of the ordering ensures the decidability of provability in~$\GfourKM$.

\begin{proposition}[Decidability \rocqdoc{KM.Sequent.DecisionProcedure.html\#Provable_dec}]
    Provability in $\GfourKM$ is decidable.
\end{proposition}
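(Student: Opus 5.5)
We decide provability of a sequent $\Gamma\Ra\Delta$ by well-founded induction on $\ordering$ (Definition~\ref{def:ordering}), using the preceding proposition that every rule instance has premises strictly below its conclusion. Concretely, we establish, for every sequent $s$, the disjunction ``$s$ is provable in $\GfourKM$ or $s$ is not provable''. We make it constructive, as required in Rocq, by building a decision procedure whose recursive calls are justified by the well-foundedness of the Dershowitz--Manna ordering over the weight ordering on formulas.

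\textbf{Step 1: finitely many backward rule instances.} We show that for a given sequent $s=(\Gamma\Ra\Delta)$ the set of rule instances with conclusion $s$ is finite, and that we can compute the list of their premise-lists. For each rule we enumerate the candidate principal formulas. On the left these are the members of $\Gamma$ of the right shape, for instance $\bot$, a conjunction, a disjunction, or an implication whose antecedent is a conjunction, a disjunction, a variable $p$ with $p\in\Gamma$, a box, or an implication. On the right they are the members of $\Delta$ of the right shape. Each choice determines the premises uniquely, since $\openboxes\Gamma$ is a computable function of $\Gamma$ and the context is $\Gamma$ (resp.\ $\Delta$) minus the chosen occurrence. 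Removing a single occurrence from a multiset is computable because equality of formulas is decidable, which follows from decidable equality on $\PropVar$. The axioms $\BotL$ and $\IdP$ are checked directly: $\bot\in\Gamma$, or some variable lies in both $\Gamma$ and $\Delta$.

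\textbf{Step 2: recursion.} Assume by induction hypothesis that provability is decided for every $s'\ordering s$. For each candidate instance from Step 1, every premise is $\ordering s$ by the preceding proposition, so the induction hypothesis decides each premise. The instance yields a proof of $s$ if and only if all its premises are decided provable; this is a finite conjunction of decidable facts. Then $s$ is provable if and only if some instance in the finite list succeeds, which is a finite disjunction. Soundness of this equivalence uses that every derivation of $s$ ends with some rule instance in the list, by the completeness of the enumeration in Step 1. That instance's premises are then provable, so the procedure answers positively. Conversely, a positive answer assembles a derivation.

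\textbf{Main obstacle.} The delicate part is Step 1: proving, in a form usable by the proof assistant, that the enumeration of rule instances is \emph{exhaustive}. We must show that any multiset decomposition $\Gamma=\Gamma',\phi$ matching a rule pattern appears in our list up to multiset equality, so the argument must be robust under permutation of multisets. I would represent multisets as lists modulo permutation and prove that provability is invariant under permutation. I would then enumerate, for each position in the list, the split obtained by removing that element. This reduces exhaustiveness to a lemma saying that every $\phi\in\Gamma$ arises from some position. The remaining ingredients are the finiteness of the rule set and the well-foundedness of the multiset ordering, both standard.
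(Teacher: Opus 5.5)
Your proposal is correct and matches the paper's argument: the applicability of each rule is decidable and only finitely many rule instances apply to a given sequent. Since every rule instance decreases with respect to $\ordering$ (the preceding proposition), well-founded induction on $\ordering$ then yields the decision procedure. The paper's text does not discuss your extra care about enumerating every principal-formula occurrence exhaustively, up to permutation; it leaves that to its Rocq development.
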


We are now required to prove several admissibility lemmas for~$\GfourKM$.
Cut, in particular, will ensure that the calculus is complete for $\KM$. 


Before focusing on cut we have to prove the admissibility of contraction,
which has to be abandoned as a rule in $\GfourKM$ to ensure termination,
as done in $\GfouriP$ and $\GfouriSLt$.
In the following, we highlight differences between our calculus $\GfourKM$ and those inspired by Dyckhoff and Negri~\cite{DycNeg00} mentioned earlier.

First, let us observe the invertibility of the new rules.
\begin{proposition}\label{prop:inv}
The following rules are admissible:
\begin{center}
\begin{tabular}{c @{\hspace{0.8cm}} c}
\begin{prooftree}
\hypo{\Gamma \Ra \varphi \vee \psi,\Delta}
\infer1[$\scriptstyle{(\lor\text{R}^{-1})}$]{\Gamma\Ra \varphi, \psi,\Delta}
\end{prooftree}\label{eq:OrRrev}
 & 
\hfill%
\begin{prooftree}
\hypo{\Gamma \Ra \varphi \to \psi,\Delta}
\infer1[$\scriptstyle{(\rightarrow\text{R}^{-1})}$]{\Gamma, \varphi \Ra \psi,\Delta}
\end{prooftree}\label{eq:ImpRrevl} \\
\end{tabular}
\end{center}
\end{proposition}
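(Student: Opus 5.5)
We prove both admissibility claims by strong induction on the height of the derivation of the premise, with a case analysis on the last rule applied. Let $\theta$ denote the principal formula of the rule being inverted: $\theta=\varphi\lor\psi$ for $(\lor\text{R}^{-1})$ and $\theta=\varphi\to\psi$ for $(\rightarrow\text{R}^{-1})$. First, we need height-preserving admissibility of weakening on both sides of the sequent. We establish it beforehand by a routine induction on height. It is unproblematic for the rules that delete the succedent ($\BoxR$, and the right premises of $\ImpR$ and $\ImpLImp$) and for those that use $\openboxes\Gamma$. Weakening commutes with $\openboxes$ because $\openboxes(\Gamma,\chi)=\openboxes\Gamma,\openboxes\chi$.

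Suppose the last rule has $\theta$ as its principal formula. For $\OrR$, the premise $\Gamma\Ra\varphi,\psi,\Delta$ is exactly what we want. For $\ImpR$, the left premise $\Gamma,\varphi\Ra\psi,\Delta$ is exactly the goal; the right premise is simply discarded. This is the key point: the new shape of $\ImpR$ keeps $\Delta$ in its left premise, which makes the inversion immediate. Now suppose $\theta$ is a side formula in the succedent $\Delta'$ of the conclusion. If the rule is an axiom ($\BotL$ or $\IdP$), the goal sequent is also an instance of the same axiom, since both are closed under extra context. If the rule keeps $\Delta'$ in some premises, we apply the induction hypothesis to those premises and reapply the rule. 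This covers $\AndL$, $\AndR$, $\OrL$, $\OrR$ on another formula, $\ImpLAnd$, $\ImpLOr$, $\ImpLVar$, the right premise of $\ImpLBox$, the left premise of $\ImpR$ on another formula, and the first and third premises of $\ImpLImp$.

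For premises that drop the succedent, we must treat the two inversions separately.

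For $(\lor\text{R}^{-1})$, the antecedent does not change, so those premises can be reused verbatim. Examples are the premise of $\BoxR$, the right premise of $\ImpR$, the left premise of $\ImpLBox$ and the middle premise of $\ImpLImp$.

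For $(\rightarrow\text{R}^{-1})$, the antecedent grows by $\varphi$, so a premise with antecedent $\openboxes\Gamma'$ must become one with antecedent $\openboxes(\Gamma',\varphi)=\openboxes\Gamma',\openboxes\varphi$. We obtain this by weakening with $\openboxes\varphi$, which is exactly what height-preserving left weakening provides. One case looks delicate: the last rule is $\BoxR$ with $\theta$ in $\Delta$. The goal $\Gamma,\varphi\Ra\Box\chi,\psi,\Delta''$ then follows from the premise $\openboxes\Gamma,\Box\chi,\openboxes\varphi\Ra\chi$, itself obtained by weakening. So it is in fact fine.

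The remaining subtle case is when $\theta$ sits in the succedent while the last rule acts on the left and produces a premise whose antecedent is modified, as with $\ImpLImp$. There we apply the induction hypothesis only to the premises that retain $\Delta$, and weaken the others by $\openboxes\varphi$. I expect the main obstacle to be this bookkeeping: checking, rule by rule, that each premise either retains $\Delta$, so the induction hypothesis applies, or discards it, so weakening by $\varphi$ or $\openboxes\varphi$ suffices. Making this work relies on proving weakening first and on $\openboxes$ distributing over multiset union.
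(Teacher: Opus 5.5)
Your proof is correct. The paper states this proposition without a written proof, and your argument is the standard one it implicitly relies on. You induct on the derivation. The principal $\ImpR$ case is closed by its left premise. Where a premise keeps the succedent, you apply the induction hypothesis and reapply the rule. Where a premise drops the succedent, you reuse it verbatim for $(\lor\text{R}^{-1})$, and for $(\rightarrow\text{R}^{-1})$ you first weaken it on the left by $\openboxes\varphi$.

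Two small remarks:
\begin{itemize}
\item Weakening does not need to be height-preserving, since the weakened derivations are never fed back into the induction hypothesis.
\item In $\ImpLImp$ it is the middle premise, not the right one, that drops the succedent.
\end{itemize}
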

In other words, $\OrR$ is invertible thanks to multi-succedent sequents, and
$\ImpR$ is here only partially invertible, in contrast with~$\GfouriP$ where it is fully invertible.

The proofs of admissibility of contraction on the left for $\GfouriP$ and $\GfouriSLt$ both follow the same pattern:
prove the admissibility of left implication $\ImpL$,
then the partial invertibility of $\ImpLImp$ up to contraction, i.e.~$\ImpLDup$,
before finally focusing on contraction on the left.
When turning to the the multi-succedent calculus $\GfouriP'$ for $\IPC$,
Dyckhoff and Negri lightly adapt the proof:
they first need to prove the admissibility of contraction on the right~$\contrR$
and then follow the pattern described above,
though they prove a strengthened lemma for left-implication.
Unfortunately, we cannot replicate here this succession
because of a circularity caused by the dependency of 
the admissibility of $\contrR$ on the admissibility of contraction on the left.
The root of this dependency lies in the rule $\ImpR$ for $\KM$: 
when proving the admissibility of $\contrR$ on $\Gamma\Ra\phi\to\psi,\phi\to\psi,\Delta$
with $\phi\to\psi$ principal in the last rule applied,
the duplicate is not simply weakened from premises to conclusion, as in $\GfouriP$,
but comes from the left premise $\Gamma,\phi\Ra\psi,\phi\to\psi,\Delta$.
To deal with this duplicate in the premise,
we apply Proposition~\ref{prop:inv} to get $\Gamma,\phi,\phi\Ra\psi,\psi,\Delta$,
and then need to contract on both left and right.
So, contraction on the right requires contraction on the left,
but also is at the start of a succession of lemmas leading to contraction on the left.
We solve this circular dependency by proving all these lemmas in a single mutual induction,
using the well-founded ordering on sequents from Definition~\ref{def:ordering}.

\begin{proposition}[\rocqdoc{KM.Sequent.SequentProps.html\#ImpL_dup_contr}] The following rules are admissible.
\label{prop:contractions}
\begin{center}
\begin{tabular}{c @{\hspace{0.8cm}} c}
\hypertarget{rule:contrR}{
$\begin{prooftree}
\hypo{\Gamma \Ra \psi, \psi,\Delta}
\infer1[$\contrR$]{\Gamma\Ra \psi,\Delta}
\end{prooftree}$}
& 
\hypertarget{rule:contrL}{
$\begin{prooftree}
\hypo{\Gamma, \psi, \psi \Ra\Delta}
\infer1[$\contrL$]{\Gamma, \psi\Ra \Delta}
\end{prooftree}$}
\\
& \\
\hypertarget{rule:ImpL}{
$\begin{prooftree}
\hypo{\Gamma\Ra\varphi,\Delta}
\hypo{\Gamma,\psi\Ra \Delta}
\infer2[$\ImpL$]{\Gamma, \varphi \rightarrow \psi\Ra\Delta}
\end{prooftree}$}     
& 
\hypertarget{rule:ImpLDup}{
$\begin{prooftree}
\hypo{\Gamma, (\varphi \rightarrow \psi)\rightarrow \chi \Ra \Delta}
\infer1[$\ImpLDup$]
    {\Gamma, \varphi, (\psi \rightarrow\chi), (\psi \rightarrow\chi) \Ra \Delta}
\end{prooftree}$}
\end{tabular}
\end{center}

\end{proposition}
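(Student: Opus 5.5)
We prove all four rules simultaneously by a single well-founded induction on the ordering $\ordering$ of Definition~\ref{def:ordering}, as announced above. The statement proved by induction on a sequent $s$ is: every instance of $\contrR$, $\contrL$, $\ImpL$ and $\ImpLDup$ whose \emph{conclusion} is $\ordering$-below or equal to $s$ is admissible. Any instance of one of the four rules may then be used inside the proof of another, provided its conclusion is strictly smaller than the sequent currently treated. For $\ImpL$ and $\ImpLDup$ the conclusion is larger than the premises. For the contractions, the conclusion is smaller than the premise. Consequently all appeals to the induction hypothesis must be phrased in terms of conclusions, which is natural since the induction is on the sequent to be derived.

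Before the induction we establish, separately, the height-preserving admissibility of weakening on both sides. We also need the invertibility of the rules other than $\ImpR$ and $\ImpLImp$: $\AndL$, $\AndR$, $\OrL$, $\ImpLAnd$, $\ImpLOr$, $\ImpLVar$, the right premise of $\ImpLBox$, and the rightmost premise of $\ImpLImp$. To these we add Proposition~\ref{prop:inv}. These results do not depend on contraction and follow by routine induction on height.

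\textbf{$\ImpL$.} We proceed by cases on the shape of $\varphi$, as in Dyckhoff--Negri. If $\varphi$ is $\bot$, a variable, a conjunction, a disjunction or a box, we use the corresponding left-implication rule, decomposing $\Gamma\Ra\varphi,\Delta$ by invertibility where needed. For variables we additionally perform an inner induction on the derivation of $\Gamma\Ra p,\Delta$. If $\varphi=\alpha\to\beta$, we need $\ImpLImp$. Its first premise comes from the left premise of $\ImpR$ by weakening in $\beta\to\psi$. Its second premise comes from the right premise $\openboxes\Gamma,\alpha\Ra\beta$ by weakening. The third premise is the given $\Gamma,\psi\Ra\Delta$. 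The delicate point is that $\Gamma\Ra\alpha\to\beta,\Delta$ may not end with $\ImpR$ on that formula. In that case we permute: we apply the induction hypothesis for $\ImpL$ to the premises of the last rule, whose conclusions are $\ordering$-smaller, and then reapply the rule.

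\textbf{$\contrL$ and $\contrR$.} These follow by case analysis on the last rule of the premise. If the contracted formula is not principal, we apply the induction hypothesis to the premises and reapply the rule. This step has one subtlety: for $\BoxR$, $\ImpR$ and $\ImpLImp$, the operation $\openboxes$ duplicates the contracted formula consistently, so it is still contracted in smaller sequents. If the formula is principal, we use inversion on the other copy and contract the components. For $\contrR$ on $\varphi\to\psi$, this is the circular case described above: Proposition~\ref{prop:inv} gives $\Gamma,\varphi,\varphi\Ra\psi,\psi,\Delta$, and we then apply $\contrL$ and $\contrR$ at smaller sequents, which the induction hypothesis makes available. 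For $\contrL$ on $(\varphi\to\chi)\to\psi$ principal in $\ImpLImp$, we use $\ImpLDup$ on the other copy inside the first two premises. We then contract $\chi\to\psi$ and $\varphi$, again at smaller sequents. For $\contrL$ on $\Box\varphi\to\psi$, the left premise contains $\openboxes$ of the other copy, namely $\Box\varphi\to\psi$ itself. We remove it by weakening admissibility combined with the fact that $\Box\varphi$ is already present, using $\ImpL$ with the axiom-like derivability of $\Box\varphi\Ra\Box\varphi$.

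\textbf{$\ImpLDup$.} We analyse the premise $\Gamma,(\varphi\to\psi)\to\chi\Ra\Delta$. In the principal case we have the premises of $\ImpLImp$. The rightmost premise $\Gamma,\chi\Ra\Delta$ is turned into $\Gamma,\psi\to\chi\Ra\Delta$ using $\ImpL$ with $\Gamma,\psi\to\chi\Ra\psi\to\chi$. The first premise $\Gamma,\psi\to\chi,\varphi\Ra\Delta,\psi$ and the rightmost premise then combine via $\ImpL$ into $\Gamma,\varphi,\psi\to\chi,\psi\to\chi\Ra\Delta$. Both uses of $\ImpL$ have conclusions smaller than the current sequent.

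The main obstacle is bookkeeping the measure. Each use of the induction hypothesis must be checked to have a conclusion strictly $\ordering$-below the sequent being handled. This is tightest for $\ImpLDup$ and for $\ImpL$ with $\varphi$ an implication, where the conclusion contains $\psi\to\chi$ twice. Here the doubling of the succedent in the measure and the weight convention from Definition~\ref{def:weight} must be exploited carefully.
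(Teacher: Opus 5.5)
Your architecture matches the paper's: a single well-founded induction on conclusions, the $\contrR$/$\ImpR$ case via Proposition~\ref{prop:inv}, the $\contrL$/$\ImpLImp$ case via $\ImpLDup$, and the principal $\ImpL$ case via $\ImpLImp$. However, the step you single out as the tightest one fails as stated. In the principal case of $\ImpLDup$, the final application of $\ImpL$ acts on one copy of $\psi\to\chi$. Its left premise is $\Gamma,\varphi,\psi\to\chi\Ra\psi,\Delta$ (the first premise of $\ImpLImp$), and its right premise is $\Gamma,\varphi,\psi\to\chi,\chi\Ra\Delta$ (a weakening of the third). Its conclusion is $\Gamma,\varphi,\psi\to\chi,\psi\to\chi\Ra\Delta$, which is \emph{identical} to the conclusion of the $\ImpLDup$ instance you are proving. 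Your induction hypothesis covers only strictly $\ordering$-smaller conclusions, and neither the doubled succedent nor the weight of conjunctions can make a sequent smaller than itself. So ``both uses of $\ImpL$ have conclusions smaller than the current sequent'' is false, and the argument is circular here. The missing idea, stated explicitly in the paper, is to order the statements \emph{within} each level. At sequent $s$, first establish $\ImpL$ with conclusion $s$, using the other statements only strictly below $s$. Then prove $\ImpLDup$ at $s$ with this freshly obtained $\ImpL$, not with the induction hypothesis.

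For this ordering to be legitimate, the proof of $\ImpL$ at $s$ must call $\ImpLDup$ only strictly below $s$, and your sketch of $\ImpL$ hides exactly this dependency. Suppose $\Gamma\Ra\varphi,\Delta$ ends with $\ImpLImp$, say $\Gamma=\Gamma',(\alpha\to\beta)\to\gamma$. Then ``apply the induction hypothesis for $\ImpL$ to the premises and reapply the rule'' needs, for the first premise, the sequent $\Gamma',\beta\to\gamma,\alpha,\psi\Ra\Delta,\beta$. None of your contraction-free inversions yields it. It must be obtained from $\Gamma',(\alpha\to\beta)\to\gamma,\psi\Ra\Delta$ by $\ImpLDup$ and $\contrL$ at smaller conclusions, which is why the paper's $\ImpL$ case invokes the hypotheses for $\ImpL$, $\contrL$ and $\ImpLDup$.

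Separately, your first step in the $\ImpLDup$ case, turning $\Gamma,\chi\Ra\Delta$ into $\Gamma,\psi\to\chi\Ra\Delta$, is unsound: $q\Ra q$ is provable but $r\to q\Ra q$ is not. It is also unnecessary. Some smaller slips:
\begin{itemize}
\item There is no left rule for $\bot\to\psi$, so that case of $\ImpL$ needs removal of $\bot$ on the right followed by weakening.
\item $\BoxR$ is not invertible, so the box case of $\ImpL$ needs the same permutation argument as the implication case.
\item In the $\contrL$ case for $\Box\varphi\to\psi$, the superfluous copy in the left premise of $\ImpLBox$ is removed by inverting it to $\psi$ and contracting $\psi$. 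Weakening or $\ImpL$ cannot do this, since they only add formulas.
\end{itemize}
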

\begin{proof}
  The four statements are proved simultaneously by well-founded induction on the conclusion of each rule, followed by a case analysis on the proof of the rightmost assumption. Here, we only sketch the cases that significantly differ from
  Dyckhoff and Negri's proofs and highlight the circular dependency between these properties.
  Crucially, we start by by proving~$\ImpL$, as it is used at the same level in the case of$~\ImpLDup$.
  \begin{enumerate}
    \item[$\ImpL$] When the last rule is $\ImpLImp$, we use the induction hypothesis for $\ImpL$, $\contrL$ and $\ImpLDup$.
    \item[$\contrR$] The main case is for the $\ImpR$-rule with $\psi = \psi_1 \rightarrow \psi_2$ as principal formula.
    The sequent $\Gamma \Ra \Delta, \psi_1 \rightarrow\psi_2$ is then proved by applying successively $\ImpR$, left contraction on~$\psi_1$, right contraction on~$\psi_2$ and partial invertibility of $\ImpR$ (Proposition~\ref{prop:inv}).
    \item[$\ImpLDup$] The main case is when the premise is proved by using the $\ImpLImp$ rule on the principal formula.
    Immediately applying the rule $\ImpL$ we just obtained (not the induction hypothesis!) leaves us with deriving the sequents
    $\Gamma, \varphi, (\psi \rightarrow\chi) \Ra \Delta$ which we have by assumption ; and
    $\Gamma, \varphi, (\psi \rightarrow\chi), \chi \Ra \Delta$, which we have by weakening on the left.
    \item[$\contrL$] The critical case is $\ImpLImp$, which uses the induction hypothesis for $\ImpLDup$.
  \end{enumerate}
\end{proof}

With contraction under our belt, we proceed to prove the admissibility of cut by local proof transformations, giving us cut elimination.
We follow the argument developed for terminating calculi for provability logics~\cite{GorRamShi21,GorShi22,ShiGieGorIem23},
which uses the inductive principle extracted from the well-founded ordering on sequents.

We prove the admissibility of the following additive variant of the cut rule. The multiplicative variant easily follows via weakening.
\begin{proposition}[\rocqdoc{KM.Sequent.Cut.html\#symmetric_cut}]\label{prop:cut}
    The additive cut rule is admissible:%
    
\hypertarget{rule:symcut}{$$\begin{prooftree}
  \hypo{\Gamma \Ra \Delta, \varphi}
  \hypo{\Gamma, \varphi \Ra \Delta}
  \infer2[$\symcut$]{\Gamma \Ra \Delta}
\end{prooftree}$$}
\end{proposition}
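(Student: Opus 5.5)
We prove admissibility of $\symcut$ by a lexicographic induction: first on the weight $w(\varphi)$ of the cut formula (Definition~\ref{def:weight}), then on the conclusion $\Gamma\Ra\Delta$ with respect to the well-founded ordering $\ordering$ of Definition~\ref{def:ordering}. Using the sequent ordering rather than proof height is the method developed for terminating calculi of provability logics~\cite{GorRamShi21,GorShi22,ShiGieGorIem23}. It is what makes the $\BoxR$ and $\ImpLBox$ cases go through: their premises have the diagonal shape $\openboxes\Gamma,\Box\phi\Ra\phi$, and these are $\ordering$-smaller than the conclusion even though they are not subderivations of a cut with smaller formula.

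We proceed by case analysis on the last rules of the two premises $\Gamma\Ra\Delta,\varphi$ and $\Gamma,\varphi\Ra\Delta$.

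\emph{Initial and non-principal cases.} If either premise is an instance of $\IdP$ or $\BotL$, the conclusion is easily seen to be an axiom or obtained by weakening (height-preserving weakening is assumed, as it is routine). Suppose instead that $\varphi$ is not principal in the left premise, so that premise ends with a rule $r$ acting on $\Gamma$ or $\Delta$. We permute the cut upward. For rules that keep $\Gamma$ and $\Delta$ in context, we apply the inverse of $r$ to the right premise: full invertibility of the propositional left rules, and Proposition~\ref{prop:inv} for $\OrR$ and partially for $\ImpR$. We then cut in each premise of $r$; these premises are $\ordering$-smaller, so the induction hypothesis applies. For the context-deleting rules ($\BoxR$, and the right premise of $\ImpR$, the middle premise of $\ImpLImp$, and the left premise of $\ImpLBox$), the succedent $\Delta$ together with $\varphi$ is simply discarded, so the rule can be reapplied directly to $\Gamma\Ra\Delta$ with no cut at all. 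The symmetric situation, where $\varphi$ is not principal on the right, is handled in the same way. The one exception is $\ImpR$ on the right premise with $\varphi$ in $\Gamma$: its second premise $\openboxes(\Gamma,\varphi),\chi\Ra\theta$ then requires a cut on $\openboxes\varphi$. If $\varphi=\Box\alpha$, this is a cut on $\alpha$, which has smaller weight. Otherwise we need $\openboxes\Gamma\Ra\varphi$ (or $\openboxes\Gamma,\chi\Ra\theta,\varphi$). We obtain it from the left premise by a substitution lemma of the form ``if $\Gamma\Ra\Delta$ then $\openboxes\Gamma\Ra\Delta$''. We expect this lemma to follow from the admissibility of the Löb-style rule, a consequence of (CP), and we will prove it before the main induction.

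\emph{Principal cases.} When $\varphi$ is principal on both sides, we reduce to cuts on proper subformulas. The $\land$ and $\lor$ cases are standard. For $\varphi=\Box\alpha$, the left premise ends with $\BoxR$, giving $\openboxes\Gamma,\Box\alpha\Ra\alpha$, and the right premise ends with $\BoxR$ or $\ImpLBox$ on some other formula, using $\Box\alpha$ only through $\openboxes$. We first cut $\Box\alpha$ away against that premise, which is legitimate because the resulting sequent is $\ordering$-smaller than $\Gamma\Ra\Delta$. We then cut on $\alpha$, which is legitimate because $\alpha$ has smaller weight, with contraction (Proposition~\ref{prop:contractions}) cleaning up duplicates.

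The implicational cases, where $\varphi=\alpha\to\beta$ is principal by $\ImpR$ on the left and by one of $\ImpLVar$, $\ImpLAnd$, $\ImpLOr$, $\ImpLImp$, $\ImpLBox$ on the right, are the main obstacle. We follow Dyckhoff and Negri~\cite{DycNeg00}, replacing the compound left-implication with cuts on formulas of smaller weight such as $\alpha_1\to(\alpha_2\to\beta)$, and using $\ImpL$ and $\ImpLDup$ from Proposition~\ref{prop:contractions}. In the $\ImpLImp$ and $\ImpLBox$ cases we must additionally supply the $\openboxes$-premises. Here we use the second premise $\openboxes\Gamma,\alpha\Ra\beta$ of $\ImpR$, and we expect this to line up exactly with the middle premise of $\ImpLImp$; checking that match precisely is where we expect most of the effort to lie.
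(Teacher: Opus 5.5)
Your induction scheme is the paper's: weight of the cut formula first, then $\ordering$ on the conclusion. However, the step you single out as ``the one exception'' does not go through.

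\textbf{The gap.} Suppose the right premise $\Gamma,\varphi\Ra\Delta$ ends with $\ImpR$ on $\chi\to\theta\in\Delta$, with a non-boxed $\varphi$ in the antecedent context. You propose to get $\openboxes\Gamma\Ra\varphi$ or $\openboxes\Gamma,\chi\Ra\theta,\varphi$ by applying the unboxing lemma to the left premise. But the left premise is $\Gamma\Ra\Delta',\chi\to\theta,\varphi$. The lemma only yields $\openboxes\Gamma\Ra\Delta',\chi\to\theta,\varphi$, and nothing lets you discard $\Delta'$. The trick is sound only in the single-succedent setting of $\GfouriSLt$.

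A concrete failure: take $\Gamma=\{q\}$, $\Delta=\{q,r\to s\}$ and $\varphi=s\land s$.
\begin{itemize}
\item The left premise $q\Ra q,r\to s,s\land s$ is derived by $\AndR$, so $\varphi$ is even principal there.
\item The right premise $q,s\land s\Ra q,r\to s$ is derived by $\ImpR$ from $q,s\land s,r\Ra s,q$ and $q,s\land s,r\Ra s$.
\item Yet both $q\Ra s\land s$ and $q,r\Ra s,s\land s$ are underivable.
\end{itemize}

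$\ImpR$ is also not the only such rule. $\BoxR$, the middle premise of $\ImpLImp$ and the left premise of $\ImpLBox$ likewise turn an antecedent $\varphi$ into $\openboxes\varphi$ rather than discarding it. So ``handled in the same way'' is false for them too. Finally, for $\varphi=\Box\alpha$, your ``cut on $\alpha$'' still needs a left premise $\openboxes\Gamma,\chi\Ra\theta,\alpha$, which you never produce.

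\textbf{The missing idea.} A $\Delta$-free derivation of the cut formula can only come from the last rule of the left premise. So inspect the right premise's last rule only once $\varphi$ is principal on the left, and only when $\varphi$ is an implication or a box.
\begin{itemize}
\item If $\varphi$ is introduced by $\IdP$, use $\contrL$.
\item If $\varphi$ is introduced by $\AndR$ or $\OrR$, invert $\AndL$ or $\OrL$ in the right premise and cut on the immediate subformulas, which have smaller weight. You never need to look at the right derivation.
\item If $\varphi=\varphi_1\to\varphi_2$ comes from $\ImpR$, or $\varphi=\Box\alpha$ from $\BoxR$, use the $\Delta$-free premises $\openboxes\Gamma,\varphi_1\Ra\varphi_2$, resp.\ $\openboxes\Gamma,\Box\alpha\Ra\alpha$.
\end{itemize}
In the last case, for each $\openboxes$-premise $\Theta,\openboxes\varphi\Ra\psi$ of the right rule, build $\Theta\Ra\psi,\varphi$ from that $\Delta$-free premise. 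Use weakening and the unboxing lemma. When the right rule acts on the antecedent, also use the relevant partial inversion: $\ImpLImprev$ (which follows from $\ImpLDup$ and $\contrL$), or the inversion of $\ImpLBox$ in its right premise. Then cut on $\varphi$ with conclusion $\Theta\Ra\psi$. This is a premise of a rule instance concluding $\Gamma\Ra\Delta$, hence $\ordering$-smaller. For $\Box\alpha$, precede this cut with a cut on $\alpha$ of smaller weight, exactly as in your principal box case.

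Also, prove the unboxing lemma directly, by a height-preserving induction that replaces a left $\Box\psi$ by $\psi$. Deriving it from (CP) would require cut.
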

\begin{proof}
    By induction first on the weight of the cut formula, then by well-founded induction on the sequent $\Gamma\Ra \Delta$ as for $\GfouriSLt$~\cite{ShiGieGorIem23}.
\end{proof}


The results accumulated so far allow us to show that our calculus $\GfourKM$ captures exactly the logic $\KM$.
\begin{theorem}[\rocqdoc{KM.Sequent.Equiv_KMH.html\#G4KM_sound_KMH},\rocqdoc{KM.Sequent.Equiv_KMH.html\#G4KM_compl_KMH}]
If $\Gamma,\Delta,\{\phi\}$ is a multiset of formulas, then we have
\begin{center}
\begin{tabular}{c @{\hspace{0.3cm}}c @{\hspace{0.3cm}} c 
                @{\hspace{0.8cm}} c @{\hspace{0.8cm}}
                c @{\hspace{0.3cm}}c @{\hspace{0.3cm}} c }
$\vdash\Gamma \Ra \Delta$ & $\rightsquigarrow$ & $\Gamma \vdash_{\KMH}\bigvee\Delta$
& and &
$\Gamma \vdash_{\KMH}\phi$ & $\rightsquigarrow$ & $\vdash\Gamma \Ra \phi$ \\
\end{tabular}
\end{center}
\end{theorem}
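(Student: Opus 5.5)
The two directions are handled separately: soundness by induction on $\GfourKM$-derivations, and completeness by induction on $\KMH$-derivations.

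\emph{Soundness.} For each rule of Figure~\ref{fig:iseq-pc}, I will show that $\KMH$-validity of the premises, read as $\Gamma\vdash_{\KMH}\bigvee\Delta$, yields validity of the conclusion. The initial sequents and the propositional rules $\AndL$, $\AndR$, $\OrL$, $\OrR$, $\ImpLAnd$, $\ImpLOr$ and $\ImpLVar$ are routine intuitionistic reasoning, using the deduction theorem for $\KMH$. The deduction theorem holds because $\sysrule{Nec}$ may be taken with empty premise set.

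The modal rules need two auxiliary facts about $\KMH$:
\begin{enumerate}
\item the Completeness Principle $\phi\to\Box\phi$, hence $\gamma\vdash\Box\gamma$ for every $\gamma$, so that $\Gamma\vdash_{\KMH}\Box\bigwedge\openboxes\Gamma$;
\item the GL/SL-style reasoning: from $\openboxes\Gamma,\Box\phi\vdash\phi$ we get $\Gamma\vdash\Box\phi$.
\end{enumerate}
For the second fact, boxing gives $\Box\openboxes\Gamma\vdash\Box(\Box\phi\to\phi)$. Axiom GL then yields $\Box\phi$, and $\Gamma\vdash\Box\openboxes\Gamma$ holds by CP and axiom 4. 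This handles $\BoxR$. For $\ImpLBox$, the left premise gives $\Gamma,\Box\phi\to\psi\vdash\Box\phi$ by the same argument, since $\psi$ implies $\Box\phi\to\psi$ and is used under the box via $\Box(\Box\phi\to\psi)$, obtained from CP. Then $\psi$ follows, and we close with the right premise.

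The genuinely new cases are $\ImpR$ and $\ImpLImp$, and I expect these to be the main obstacle. For $\ImpR$, I will use axiom $\KMax$ in the form $\Box\chi\to(\delta\lor(\delta\to\chi))$, instantiated with $\delta:=\phi\to\psi$ and $\chi:=\phi\to\psi$. The right premise gives $\openboxes\Gamma\vdash\phi\to\psi$, hence $\Gamma\vdash\Box(\phi\to\psi)$ by fact 1 and $\sysrule{Nec}$/K. The left premise gives $\Gamma\vdash\phi\to(\psi\lor\bigvee\Delta)$.

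From these I must derive $\Gamma\vdash(\phi\to\psi)\lor\bigvee\Delta$. I will use $\KMax$ with $\delta:=\bigvee\Delta$, giving $\bigvee\Delta\lor(\bigvee\Delta\to(\phi\to\psi))$. In the second disjunct, combining with $\phi\to(\psi\lor\bigvee\Delta)$ yields $\phi\to\psi$, and the case is done. $\ImpLImp$ is handled analogously: its first two premises give $(\varphi\to\chi)\lor\bigvee\Delta$ under the hypothesis $\chi\to\psi$, which is implied by $(\varphi\to\chi)\to\psi$. The $\bigvee\Delta$ branch is immediate, and in the $\varphi\to\chi$ branch we obtain $\psi$ and use the third premise.

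\emph{Completeness.} I will first show by induction on $\KMH$-derivations that $\Gamma\vdash_{\KMH}\phi$ implies that $\Gamma'\Ra\phi$ is provable for some finite $\Gamma'\subseteq\Gamma$, and then weaken. For $\sysrule{El}$, I need generalized identity $\phi\Ra\phi$, proved by induction on $\phi$ using $\ImpL$ from Proposition~\ref{prop:contractions}. $\sysrule{MP}$ is handled by $\ImpL$ together with $\symcut$ (Proposition~\ref{prop:cut}) and $\contrL$. Weakening admissibility is proved by a standard height-preserving induction.

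Each axiom is then derived directly:
\begin{itemize}
\item the intuitionistic axioms via the $\GfouriP'$-style derivations, using $\ImpL$ and the left premise of $\ImpR$;
\item K and SL via $\BoxR$ and $\ImpLBox$, as in $\GfouriSLt$;
\item $\KMax$ via the derivation displayed above, generalized to arbitrary $\phi,\psi$ using generalized identity.
\end{itemize}
For $\sysrule{Nec}$ with empty premise set, $\Ra\phi$ weakens to $\Box\phi\Ra\phi$, and $\BoxR$ applies.
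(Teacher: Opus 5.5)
Your proposal is correct and follows the paper's proof. Soundness goes by induction on $\GfourKM$-derivations, with $\KMax$ doing the work in the $\ImpR$ and $\ImpLImp$ cases, and completeness goes by induction on $\KMH$-derivations, with cut handling the $\MPax$ case.

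The only local difference is in the $\ImpR$ case. You instantiate $\KMax$ with $\delta:=\bigvee\Delta$ and get $\Gamma\vdash\Box(\phi\to\psi)$ directly from CP and K, whereas the paper takes $\delta:=\phi$ to get $\Gamma\vdash\phi\lor(\phi\to\psi)$, after obtaining $\Box(\phi\to\psi)$ via its $\BoxR$-simulating lemma. Your first-stated instantiation $\delta:=\phi\to\psi$ is a slip, since its conclusion is trivially valid, but the instantiation you actually carry out works.
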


\begin{proof}
For the implication on the left, 
we reason by induction on the proof of $\vdash\Gamma \Ra \Delta$.
Most rules of $\GfourKM$ are trivially simulated in $\KMH$,
with the exception of $\ImpR$ and $\ImpLImp$.
We focus on the former, as the latter is treated using a similar argument.
We assume that we have proofs of $\openboxes\Gamma,\phi\vdash\psi$ and
$\Gamma,\phi\vdash\psi\lor\bigvee\Delta$, and prove $\Gamma\vdash(\phi\to\psi)\vee\bigvee\Delta$.
The pivot of the proof is the obtaining of $\Gamma\vdash_{\KMH}\phi\vee(\phi\to\psi)$
from $\openboxes\Gamma,\phi\vdash_{\KMH}\psi$.
We first use the monotonicity of $\KMH$~(\rocqdoc{KM.GHC.logics.html\#KMH_monot})
and the deduction theorem~(\rocqdoc{KM.GHC.properties.html\#KMH_Deduction_Theorem}) to obtain
$\openboxes\Gamma,\Box(\phi\to\psi)\vdash\phi\to\psi$.
Then, we can use a lemma simulating the sequent rule $\BoxR$~(\rocqdoc{KM.Sequent.Equiv_KMH.html\#open_boxes_GL_rule})
giving us $\Gamma\vdash\Box(\phi\to\psi)$.
As $\Box(\phi\to\psi)\to(\phi\vee(\phi\to(\phi\to\psi))$ is an instance of the axiom $\KMax$,
we get $\Gamma\vdash\phi\vee(\phi\to(\phi\to\psi))$, hence $\Gamma\vdash\phi\vee(\phi\to\psi)$.
A reasoning by cases~(\rocqdoc{KM.GHC.properties.html\#ND_OrE})
and the use of $\Gamma,\phi\vdash_{\KMH}\psi\lor\bigvee\Delta$
gives us $\Gamma\vdash(\phi\to\psi)\vee\bigvee\Delta$.

Now, for the right direction assume $\Gamma \vdash_{\KMH}\phi$.
We reason by induction on the proof, essentially proving that the 
sequent calculus simulates the axiomatic step,
crucially using the admissibility of cut in the $\MPax$ case.
\end{proof}

\begin{remark}
Two universes co-exist in Rocq: 
the impredicative universe \texttt{Prop} and
the predicative universe \texttt{Type}.
Proving a statement in \texttt{Type} requires constructing a concrete proof term, which can later be
inspected to build other terms. However, a proof of a statement in \texttt{Prop} only guarantees that
the property holds, without granting the ability to inspect the proof. 
As we defined $\KMH$ in \texttt{Prop} and $\GfourKM$ in \texttt{Type},
the second part of our proof above is illegal in Rocq:
the sheer existence of an axiomatic proof cannot directly be used to
build a concrete sequent proof.
However, the decidability of $\GfourKM$ comes to the rescue by
allowing us to produce a proof from the knowledge of the existence of one.
\end{remark}

%% file: PropQuantif.tex
We recall the definition of the uniform interpolation property for a logic. 
\begin{definition} 
\label{def:UIP}
A logic~$\sys{L}$ has the \emph{uniform interpolation property} if, for every
$\sys{L}$-formula~$\phi$ and variable~$p$, there exist 
$\sys{L}$-formulas, denoted by $\forall p \phi$ and $\exists p \phi$,
satisfying the following three properties:
\begin{enumerate}
\item \emph{$p$-freeness:} $ \pvf{\exists p \phi} \subseteq  \pvf{\phi} \setminus \{ p \}$ and $ \pvf{\forall p \phi} \subseteq  \pvf{\phi} \setminus \{ p \}$,
\item \emph{implication:} $\vdash_{\sys{L}} \phi \to \exists p \phi \text{ and } \vdash_{\sys{L}} \forall p \phi \to  \phi,$ and
\item \emph{uniformity:} for each formula~$\psi$ with  $p \notin \pvf{\psi}$: 
\begin{align*}
    \vdash_{\sys{L}} \phi \to \psi \ &\text{ implies } \ \vdash_{\sys{L}} \exists p \phi \to \psi,\\
    \vdash_{\sys{L}} \psi \to \phi \ &\text{ implies } \ \vdash_{\sys{L}} \psi \to \forall p \phi.
\end{align*}
\end{enumerate}
\end{definition}

As suggested by the notation, this property is related to the interpretability in the propositional language of existential and universal propositional quantifiers. 

As we are going to prove this property for $\KM$ via its sequent calculus, we first need to 
generalise its definition to sequent calculi.

\begin{definition}
\label{def:SUIP}
A set of provable sequents, denoted~$\vdash$, has the \emph{uniform
interpolation property} if, for any sequent $\Gamma \To \Delta$ and variable
$p$, there exist formulas $\EpKM{\Gamma}$ and $\ApKM{\Gamma \To \Delta}$ such
that the following three properties hold: 
\begin{enumerate}
    \item \hypertarget{pfreeness}{\pfreeness}:\vspace{-2em}
    \begin{align*}  & \text{(a) } \pvf{\EpKM{\Gamma}} \subseteq \pvs{\Gamma} \setminus \{ p \}\\
    & \text{(b)} \pvf{\ApKM{\Gamma \To \Delta}} \subseteq \pvs{\Gamma, \Delta} \setminus \{ p \},
    \end{align*}\vspace{-1.5em}
    \item \hypertarget{implication}{\implication}: (a) $\vdash \Gamma \To \EpKM{\Gamma}$ and (b) $\vdash \Gamma, \ApKM{\Gamma \To \Delta} \To \Delta$, and
    \item \hypertarget{uniformity}{\uniformity}: for any multisets $\Pi$ and
        $\Sigma$ such that $p \notin \pvs{\Pi, \Sigma}$:\vspace{-.5em}
        \begin{align*}
       &\text{(a) if }p \notin \pvs{\Sigma}\text{ and }\vdash \Pi, \Gamma \To \Sigma\text{ then }\vdash \Pi, \EpKM{\Gamma} \To \Sigma\\
       &\text{\hypertarget{threeb}{(b)} if }\vdash \Pi, \Gamma \To \Delta\text{ then }\vdash \Pi, \EpKM{\Gamma} \To \ApKM{\Gamma \To \Delta}. \label{foo}
    \end{align*} 
\end{enumerate}
We say that a sequent calculus $\sf{S}$ has \emph{uniform interpolation} if $\vdash_{\sf{S}}$ has the uniform interpolation property.
\end{definition}

\begin{proposition}
    If $\GfourKM$ has the uniform interpolation property (Definition~\ref{def:SUIP}), then the logic $\KM$ has the uniform interpolation property (Definition~\ref{def:UIP}).
\end{proposition}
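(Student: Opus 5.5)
We instantiate the formula-level interpolants with the sequent-level ones at the simplest sequents, namely $\exists p\phi := \EpKM{\{\phi\}}$ and $\forall p\phi := \ApKM{\emptyset \To \{\phi\}}$. Each property of Definition~\ref{def:UIP} will then be derived from the matching clause of Definition~\ref{def:SUIP}. We move between $\vdash_{\KMH}$ and $\GfourKM$-provability using the soundness/completeness theorem relating $\GfourKM$ and $\KMH$. We also use the deduction theorem for $\KMH$ and the rules $\ImpR$, $\ImpR^{-1}$ (Proposition~\ref{prop:inv}) and $\symcut$ (Proposition~\ref{prop:cut}).

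\emph{$p$-freeness} is immediate from clauses (1a) and (1b), since $\pvs{\{\phi\}}=\pvs{\emptyset,\{\phi\}}=\pvf{\phi}$.

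\emph{Implication.} Clause (2a) gives $\vdash \phi \To \EpKM{\phi}$. Soundness then yields $\phi\vdash_{\KMH}\EpKM{\phi}$, and the deduction theorem gives $\vdash_{\KMH}\phi\to\exists p\phi$. Similarly, clause (2b) gives $\vdash \ApKM{\emptyset\To\phi}\To\phi$, from which we get $\vdash_{\KMH}\forall p\phi\to\phi$.

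\emph{Uniformity for $\exists$.} Suppose $\vdash_{\KMH}\phi\to\psi$ with $p\notin\pvf{\psi}$. By completeness, $\vdash\emptyset\To\phi\to\psi$. The partial inverse $\ImpR^{-1}$ of Proposition~\ref{prop:inv} gives $\vdash\phi\To\psi$. We then apply clause (3a) with $\Pi=\emptyset$, $\Sigma=\{\psi\}$ and $\Gamma=\{\phi\}$, obtaining $\vdash\EpKM{\phi}\To\psi$. Soundness and the deduction theorem conclude.

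\emph{Uniformity for $\forall$.} Suppose $\vdash_{\KMH}\psi\to\phi$ with $p\notin\pvf\psi$. As before we obtain $\vdash\psi\To\phi$. We apply clause (3b) with $\Pi=\{\psi\}$, $\Gamma=\emptyset$ and $\Delta=\{\phi\}$; the side condition only requires $p\notin\pvs{\Pi}$, and $\Sigma$ is empty. This gives $\vdash\psi,\EpKM{\emptyset}\To\ApKM{\emptyset\To\phi}$.

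\emph{Main obstacle: removing $\EpKM{\emptyset}$.} The only real obstacle is eliminating the extra hypothesis $\EpKM{\emptyset}$ from this sequent. To do this we show that $\EpKM{\emptyset}$ is provable. Clause (3a) applies with $\Pi=\emptyset$, $\Gamma=\emptyset$ and $\Sigma=\{\EpKM{\emptyset}\}$, which is $p$-free by clause (1a). This reduces the task to proving $\emptyset\To\EpKM{\emptyset}$, which is exactly clause (2a) for $\Gamma=\emptyset$. Hence $\vdash\emptyset\To\EpKM{\emptyset}$.

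Weakening this to $\psi\To\ApKM{\emptyset\To\phi},\EpKM{\emptyset}$ and combining it with the sequent above by additive $\symcut$ (Proposition~\ref{prop:cut}) yields $\vdash\psi\To\ApKM{\emptyset\To\phi}$. Soundness and the deduction theorem finish the proof. Weakening is needed here; if it is not already available, it is proved by a routine induction on derivations in $\GfourKM$.
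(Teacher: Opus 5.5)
Your proposal is correct and follows the paper's proof: you use the same instantiation $\exists p\phi := \EpKM{\{\phi\}}$ and $\forall p\phi := \ApKM{\emptyset\To\{\phi\}}$, and the only non-trivial point is that $\Ep(\emptyset)$ is a theorem, which you rightly use to discharge the extra hypothesis produced by (3b). Your appeal to clause (3a) in the last paragraph is superfluous and runs the wrong way: with your parameters it only turns $\vdash\emptyset\To\Ep(\emptyset)$ into the trivial $\vdash\Ep(\emptyset)\To\Ep(\emptyset)$, whereas clause (2a) with $\Gamma=\emptyset$ already gives $\vdash\emptyset\To\Ep(\emptyset)$ on its own.
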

\begin{proof}
    Choose $\exists p\varphi := \EpKM{\{\varphi\}}$ and $\forall p\varphi := \ApKM{\emptyset \Ra \{\varphi\}}$
    and notice that $\Ep\emptyset$ is a tautology.
\end{proof}

\begin{remark}\label{rmk:constantdomain}
    The generalisation of the uniform interpolation property from formulas to sequents is not 
    straightforward,
    especially for multi-succedents sequents.
    Indeed one could be tempted to formulate the sub-property~\hyperlink{threeb}{3.(b)} as:
    $$\text{if }\vdash \Pi, \Gamma \To \Delta, \Sigma\text{ then }\vdash \Pi, \EpKM{\Gamma} \To \ApKM{\Gamma \To \Delta},\Sigma.$$
    However, this is \emph{a priori} a strictly stronger statement as it would entail the interpretability
    of a universal propositional quantifier satisfying the constant domain axiom~\cite{Gab81}:
    $$\forall p.(\varphi(p)\vee\psi)\to(\forall p.\varphi(p)\lor\psi)$$
    for any $p$-free formula $\psi$.
    This axiom is provable in second-order classical logic, 
    but not in the intuitionistic counterpart.
    While the left uniform interpolants of $\IL$
    provide an interpretation of this universal quantifier%
    \footnote{In fact, in $\IL$ they provide an interpretation of the stronger universal quantifier 
    which fully distributes over disjunctions~\cite[Example 10]{Pit92}.},
    our work seems to only establish the interpretation of a weaker, 
    more standard universal quantifier.
    However, we see no theoretical contradiction (yet) 
    in the possibility of interpreting stronger versions of this quantifier in $\KM$.
\end{remark}

We follow the structure of Pitts' original construction and its adaptation~\cite{Fer23} by defining both interpolants of sequents via mutual induction on the well-founded ordering on sequents.

\begin{definition}\label{def:EAKM}
    We define $\Ep(\Gamma) := \bigwedge\callEp{\Gamma}$ and $\Ap(\Gamma\Ra\Delta) := \bigvee\callAp{\Gamma\Ra\Delta}$
    where $\callEp{\Gamma}$ and $\callAp{\Gamma\Ra\Delta}$ are sets of formulas defined 
    by the table in Figure~\ref{fig:UIP_table}.
\end{definition}%

\begin{figure}[t]
\begin{center}
\scalebox{0.8}{
\begin{tabular}{| l l @{\hspace{0.5cm}}  l |}
\hline
    & $\Gamma$ matches & $\callEp{\Gamma}$ contains\\
\hline
    $(\Ep0)^*$ & $\Gamma',\bot$ & $\bot$ \\
    $(\Ep1)^*$ & $\Gamma',q$ & $\EpKM{\Gamma'} \wedge q$ \\
    $(\Ep2)^*$ & $\Gamma',\psi_1 \wedge \psi_2$ & $\EpKM{\Gamma', \psi_1, \psi_2}$ \\
    $(\Ep3)^*$ & $\Gamma',\psi_1 \vee \psi_2$ & $\EpKM{\Gamma',\psi_1} \vee \EpKM{\Gamma',\psi_2}$ \\
    $(\Ep4)$ & $\Gamma',(q \to \psi)$ & $q \to \EpKM{\Gamma',\psi}$ \\
    $(\Ep5)^*$ & $\Gamma',r, (r \to \psi)$ & $\EpKM{\Gamma',r, \psi}$ \\
    $(\Ep6)^*$ & $\Gamma',(\delta_1 \wedge \delta_2) \to \delta_3$ & $\EpKM{\Gamma',\delta_1 \to (\delta_2 \to \delta_3)}$ \\
    $(\Ep7)^*$ & $\Gamma',(\delta_1 \vee \delta_2) \to \delta_3$ & $\EpKM{\Gamma',\delta_1 \to \delta_3,\delta_2 \to \delta_3)}$ \\
    $(\Ep8)$ & $\Gamma',(\delta_1 \to \delta_2) \to \delta_3$ & \parbox[t]{10cm}{$\Box[\EpKM{\openboxes\Gamma',(\delta_2 \to \delta_3),\delta_1} \to \ApKM{\openboxes\Gamma',(\delta_2 \to \delta_3),\delta_1 \To\delta_2}] \to (\EpKM{\Gamma',\delta_3} \vee \EpKM{\Gamma',\delta_1\to\delta_3,\delta_1})$} \\
    $(\Ep9)$ & $\Gamma',\Box\psi$ & $\Box\EpKM{\openboxes\Gamma}$ \\
    $(\Ep10)$ & $\Gamma',\Box\delta_1 \to \delta_2$ & \parbox[t]{10cm}{$\Box[\EpKM{\openboxes\Gamma',\Box\delta_1,\delta_2} \to \ApKM{\openboxes\Gamma',\Box\delta_1,\delta_2 \To\delta_1}] \to 
    \EpKM{\Gamma',\delta_2}$} \\
\hline \hline 
    & $\Gamma\Ra \Delta $ matches & $\callAp{\Gamma \To \Delta}$ contains\\
\hline
    $(\Ap1)^*$ & $\Gamma',q \To \Delta$ & $\ApKM{\Gamma' \To \Delta}$ \\
    $(\Ap2)^*$ & $\Gamma',\psi_1 \wedge \psi_2 \To \Delta$ & $\ApKM{\Gamma', \psi_1, \psi_2 \To \Delta}$ \\
    $(\Ap3)^*$ & $\Gamma',\psi_1 \vee \psi_2 \To \Delta$ & $[\EpKM{\Gamma',\psi_1} \to \ApKM{\Gamma',\psi_1 \To \Delta}] \wedge [\EpKM{\Gamma',\psi_2} \to \ApKM{\Gamma',\psi_2 \To \Delta}]$ \\
    $(\Ap4)$ & $\Gamma',(q \to \psi) \To \Delta$ & $q \wedge \ApKM{\Gamma',\psi \To \Delta}$ \\

    $(\Ap5)^*$ & $\Gamma',r, (r \to \psi) \To \Delta$ & $\ApKM{\Gamma',r, \psi \To \Delta}$ \\
    $(\Ap6)^*$ & $\Gamma',(\delta_1 \wedge \delta_2) \to \delta_3 \To \Delta$ & $\ApKM{\Gamma',\delta_1 \to (\delta_2 \to \delta_3) \To \Delta}$ \\
    $(\Ap7)^*$ & $\Gamma',(\delta_1 \vee \delta_2) \to \delta_3 \To \Delta$ & $\ApKM{\Gamma',\delta_1 \to \delta_3,\delta_2 \to \delta_3 \To \Delta}$ \\
    $(\Ap8)$ & $\Gamma',(\delta_1 \to \delta_2) \to \delta_3 \To \Delta$ & \parbox[t]{10cm}{$[\EpKM{\Gamma',(\delta_2 \to \delta_3), \delta_1} \to \ApKM{\Gamma',(\delta_2 \to \delta_3), \delta_1 \To \Delta}] \wedge
    \Box [\EpKM{\openboxes\Gamma',(\delta_2 \to \delta_3), \delta_1} \to \ApKM{\openboxes\Gamma',(\delta_2 \to \delta_3), \delta_1 \To \delta_2}]\wedge
    \EpKM{\Gamma',\delta_3} \rightarrow \ApKM{\Gamma',\delta_3 \To \Delta}$} \\
    $(\Ap9)$ & $\Gamma \To \Delta, q$ & $q$ \\
    $(\Ap10)^*$ & $\Gamma, p \To \Delta, p$ & $\top$\\
    $(\Ap11)^*$ & $\Gamma \To \Delta, \phi_1 \wedge \phi_2$ & $\ApKM{\Gamma \To \Delta,\phi_1} \wedge \ApKM{\Gamma \To \Delta, \phi_2}$\\
    $(\Ap12)^*$ & $\Gamma \To \Delta, \phi_1 \vee \phi_2$ & $\ApKM{\Gamma \To \Delta, \phi_1,\phi_2}$  \\

    $(\Ap13)$ & $\Gamma \To \Delta, \phi_1 \to \phi_2$ & \parbox[l]{10cm}{
     $\square [\EpKM{\openboxes\Gamma',\phi_1} \to \ApKM{\openboxes\Gamma', \phi_1 \To \phi_2}] \wedge$ \hfill\\
     $[\EpKM{\Gamma,\phi_1} \to \ApKM{\Gamma, \phi_1 \To \Delta, \phi_2}]$}
    \\
    $(\Ap14)$ & $\Gamma \To \Delta, \Box\phi$ &
    $\Box[\EpKM{\openboxes\Gamma', \square \phi} \rightarrow\ApKM{\openboxes\Gamma', \square \phi \To \phi}]$  \\

    $(\Ap15)$ & $\Gamma',\Box\delta_1 \to \delta_2 \To \Delta$ & 
    \parbox[t]{10cm}{$\Box[\EpKM{\openboxes\Gamma',\Box\delta_1,\delta_2} \to \ApKM{\openboxes\Gamma',\Box\delta_1,\delta_2 \To\delta_1}] \wedge
    \ApKM{\Gamma',\delta_2 \To \Delta}$} \\

\hline
\end{tabular}
}%
\caption{Clauses to define $\EpKM{\Gamma}$ and $\ApKM{\Gamma \To \Delta}$.
In all clauses, $q \neq p$.} 
\label{fig:UIP_table}
\end{center}%
\vspace{-3em}
\end{figure}%

\begin{remark}
    While we show in Theorem~\ref{thm:KMSUIP} that Definition~\ref{def:EAKM} indeed defines uniform interpolants
    for $\GfourKM$, syntactically smaller formulas may also work.
    For instance, when $\Gamma$ (resp.~$\Gamma \Ra \Delta$) matches a line of the form $(\Ep n)^*$
    (resp.~$(\Ap n)^*$) in the table, then $\Ep(\Gamma)$ (resp. $\Ap(\Gamma\Ra\Delta)$) can be \emph{defined} as the right-hand side of that line. Indeed, such lines correspond to invertible rules in $\GfourKM$.
\end{remark}

The (very!) astute reader may have noticed an oddity in $(\Ap8)$
of the definition of the interpolant:
in the first line, the recursive $\Ap$ call is on $\Gamma',(\delta_2\to\delta_3),\delta_1\Ra\Delta$,
and not on the premise $\Gamma',(\delta_2\to\delta_3),\delta_1\Ra\delta_2,\Delta$ of the rule $\ImpLImp$.
This disappearance of $\delta_2$ on the right-hand side of this sequent in $(\Ap8)$
is motivated by the case of the $\ImpLImp$-rule in the proof of uniformity:
if it appeared in the recursive call, then the proof for this case would not go through.
The source of this problem is the fact that $\ImpLImp$ is the only left rule
with a premise whose right-hand side mixes formulas from the conclusion's 
left- and right-hand sides.

The following alternative to $\ImpLImp$ does not suffer from this issue.
\hypertarget{rule:ImpLImptwo}{$$\begin{prooftree}
\hypo{\Gamma,\chi\rightarrow\psi,\varphi\Ra\Delta}
\hypo{\openboxes\Gamma,\chi\rightarrow\psi,\varphi\Ra\chi}
\hypo{\Gamma,\psi\Ra\Delta}
\infer3[$\ImpLImptwo$]{\Gamma,(\varphi\rightarrow\chi)\rightarrow\psi\Ra\Delta}
\end{prooftree}$$}
Implicitly, this is the rule we use in $(\Ap8)$.
Why is it safe to do so?
Well, though we do not need to formally prove it, 
$\ImpLImp$ and $\ImpLImptwo$ seem equivalent:
the latter is easily proved admissible from the former and weakening on the right,
while the former is admissible from the latter in presence of cut.%
\footnote{Whether cut is admissible in the system with $\ImpLImptwo$ instead of $\ImpLImp$
is unclear: a straightforward adaptation of the argument for $\GfourKM$ does not seem to work.}

A property of $\ImpLImptwo$ we crucially use in the proof of uniformity
is its invertibility with respect to its left premise.
Note that through weakening it entails the same result for $\ImpLImp$.
\begin{proposition} The following rule is admissible in $\GfourKM$.
\hypertarget{rule:ImpLImprev}{$$\begin{prooftree}
\hypo{\Gamma, (\varphi \rightarrow \psi)\rightarrow \chi \Ra \Delta}
\infer1[$\ImpLImprev$]
    {\Gamma, \varphi, (\psi \rightarrow\chi) \Ra \Delta}
\end{prooftree}$$}
\label{prop:ImpLImprev}
\end{proposition}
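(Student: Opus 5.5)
The quickest route is to derive the rule from the admissible rule $\ImpLDup$ of Proposition~\ref{prop:contractions}, followed by contraction $\contrL$ from the same proposition. Starting from a proof of $\Gamma, (\varphi \rightarrow \psi)\rightarrow \chi \Ra \Delta$, an application of $\ImpLDup$ yields a proof of $\Gamma, \varphi, (\psi \rightarrow\chi), (\psi \rightarrow\chi) \Ra \Delta$. Contracting the duplicated occurrence of $\psi\rightarrow\chi$ with $\contrL$ then gives $\Gamma, \varphi, (\psi \rightarrow\chi) \Ra \Delta$, which is the desired conclusion.

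If one prefers a self-contained argument not routed through contraction (for instance to keep the Rocq proof term simple), I would instead prove the statement directly. The induction would be well-founded, using the ordering $\ordering$ of Definition~\ref{def:ordering} on the premise sequent, with a case analysis on the last rule applied in its proof.

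\begin{itemize}
\item When $(\varphi\to\psi)\to\chi$ is not principal, we apply the induction hypothesis to the premises and reapply the rule. The rules $\BoxR$, $\ImpLBox$, $\ImpR$ and $\ImpLImp$ have premises built from $\openboxes\Gamma$; since $(\varphi\to\psi)\to\chi$ is not boxed, it survives in $\openboxes$ unchanged. The replacement is then $\openboxes\varphi, \psi\to\chi$ on those premises. This may differ from $\varphi,\psi\to\chi$ when $\varphi$ is boxed, and weakening closes that gap.
\item Axiom cases ($\IdP$, $\BotL$) are immediate.
\item When $(\varphi\to\psi)\to\chi$ is principal, i.e.~the last rule is $\ImpLImp$, the leftmost premise is $\Gamma,\psi\to\chi,\varphi\Ra\Delta,\psi$. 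This is exactly where the direct route fails: that premise carries an extra $\psi$ on the right, and it cannot be discharged without the machinery of $\ImpL$ and contraction.
\end{itemize}

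The principal $\ImpLImp$ case is therefore the main obstacle, and it is precisely why I would commit to the first route. Proposition~\ref{prop:contractions} already absorbs this difficulty: its proof of $\ImpLDup$ handles the principal case via the admissible $\ImpL$, using the premises $\Gamma,\varphi,\psi\to\chi\Ra\psi,\Delta$ and $\Gamma,\varphi,\psi\to\chi,\chi\Ra\Delta$. So the proof reduces to the two-line derivation above.
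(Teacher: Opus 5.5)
Your argument is correct. $\ImpLImprev$ is exactly $\ImpLDup$ followed by $\contrL$ on $\psi\rightarrow\chi$, both admissible by Proposition~\ref{prop:contractions}; the paper states this proposition without a separate proof, having already described $\ImpLDup$ as the invertibility of $\ImpLImp$ up to contraction, so this is evidently the intended derivation. One slip in the direct route you set aside: when $\varphi=\Box\varphi'$, the induction hypothesis leaves $\Box\varphi'$ where the rebuilt $\openboxes$-premise needs $\varphi'$, and weakening cannot bridge that (you would need admissibility of opening boxes on the left, or a cut with $\varphi'\Ra\Box\varphi'$), but this does not affect the two-line derivation you actually rely on.
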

\vspace{-1em}
A notable consequence of this result,
reinforcing our trust in the choice for $(\Ap8)$,
is that any sequent with a proof ending with $\ImpLImp$
can also be proved by immediately applying $\ImpLImptwo$.

\begin{theorem}[\rocqdoc{KM.Sequent.PropQuantifiers.html\#KM_uniform_interpolation}]\label{thm:KMSUIP}
$\KM$ has the uniform interpolation property.
\end{theorem}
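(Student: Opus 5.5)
By the preceding proposition, it suffices to show that $\GfourKM$ has the uniform interpolation property of Definition~\ref{def:SUIP}, taking $\EpKM{\Gamma}$ and $\ApKM{\Gamma\To\Delta}$ as in Definition~\ref{def:EAKM}. First I will check that this mutual recursion is well-defined. Every recursive call in Figure~\ref{fig:UIP_table} is on a sequent that is $\ordering$-smaller than the calling one: essentially a premise of a $\GfourKM$ rule, possibly with part of the succedent removed, as in $(\Ap8)$. Each $\callEp{\Gamma}$ and $\callAp{\Gamma\To\Delta}$ is finite, since only finitely many lines match a given sequent. So the definition is a well-founded recursion on $\ordering$, with $\Ep(\Gamma)$ treated as $\Ap(\Gamma\To\emptyset)$-level data on $\Gamma\To\emptyset$. \pfreeness{} then follows by a straightforward well-founded induction: each clause introduces only variables $q\neq p$ that already occur in the matched formula, together with recursive calls on sequents whose variables are contained in those of the original.

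\implication{} is proved by a simultaneous well-founded induction on $\ordering$, clause by clause. To show $\vdash\Gamma\To\EpKM{\Gamma}$, I apply $\AndR$ over $\callEp{\Gamma}$ and, for each line, mimic the matching rule. For example, for $(\Ep8)$ I use $\ImpR$ and then $\ImpLImp$, using the induction hypotheses for both $\Ep$ and $\Ap$ on the premises. For $(\Ep9)$, $(\Ep10)$ I use $\BoxR$ or $\ImpLBox$ and the fact that $\openboxes$ of a boxed formula gives its unboxing. To show $\vdash\Gamma,\ApKM{\Gamma\To\Delta}\To\Delta$, I apply $\OrL$ over $\callAp{\Gamma\To\Delta}$ and, in each case, decompose the disjunct with left rules, combining it with the induction hypotheses. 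Here I use $\ImpL$ (Proposition~\ref{prop:contractions}) to fire implications of the form $\Ep(\dots)\to\Ap(\dots)$, cut (Proposition~\ref{prop:cut}) where convenient, and weakening. For $(\Ap8)$ I rebuild the $\ImpLImptwo$ instance and then weaken to obtain $\ImpLImp$.

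\uniformity{} is the main part. I prove (a) and (b) simultaneously by well-founded induction on $\ordering$ applied to $\Pi,\Gamma\To\Sigma$ (respectively $\Pi,\Gamma\To\Delta$), with a case analysis on the last rule of the given proof. If the principal formula lies in $\Pi$ or $\Sigma$, I apply the same rule and the induction hypothesis. Modal rules and $\ImpR$ need care here: the rule replaces $\Gamma$ by $\openboxes\Gamma$, so I must show that $\openboxes(\EpKM{\Gamma})$ can be replaced by something derivable from it. That is exactly what the $\Box$-clause $(\Ep9)$ and the boxed conjuncts in $(\Ap13)$, $(\Ap14)$, $(\Ep8)$, $(\Ep10)$ provide, together with the derivability of $\Box\Ep(\openboxes\Gamma)$ from $\Ep(\Gamma)$. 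If the principal formula lies in $\Gamma$ (or in $\Delta$ for (b)), I select the corresponding line of the table, prove the interpolant's conjunct or disjunct from the induction hypotheses, and use $\ImpR$ (both premises) to introduce implications $\Ep\to\Ap$. For the $p$-atom cases $(\Ap10)$ and $\IdP$ with $p$ split between $\Pi$ and $\Gamma$, this is impossible because $p\notin\pvs{\Pi}$. Mixed $\IdP$ with $q\neq p$ in $\Pi$ and in $\Delta$ goes through $(\Ap9)$; with $q$ in $\Gamma$ and in $\Sigma$ it goes through $(\Ep1)$.

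I expect the main obstacle to be the $\ImpLImp$ case of (b) when the principal formula is in $\Gamma$. The left premise has $\Delta,\delta_2$ on the right, mixing sides, whereas $(\Ap8)$ calls $\Ap$ on $\Delta$ only. To handle it, I would first use Proposition~\ref{prop:ImpLImprev} to turn the conclusion into $\Pi,\Gamma',\delta_1,\delta_2\to\delta_3\To\Delta$, which is $\ordering$-smaller, and apply the induction hypothesis to that sequent to obtain the first conjunct of $(\Ap8)$. The second conjunct is obtained from the middle premise through the induction hypothesis and a $\BoxR$ step, and the third from the right premise. The $\ImpR$ case, whose right premise deletes $\Delta$ and opens boxes in $\Pi$, also needs the $\Box$-conjunct of $(\Ap13)$ to be derived under $\openboxes\Pi$, using $\BoxR$ with the \textsf{GL}-style hypothesis $\Box\phi$.
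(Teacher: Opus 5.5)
Your overall architecture matches the paper's. You reduce to the sequent-level property for $\GfourKM$, define $\Ep,\Ap$ by mutual well-founded recursion on $\ordering$, prove implication by mimicking the rules, and prove uniformity by induction on the end sequent, with Proposition~\ref{prop:ImpLImprev} as the extra lemma for the $(\Ap8)$ mismatch. However, one step you declare routine fails: ``if the principal formula lies in $\Pi$, apply the same rule and the induction hypothesis.''

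Consider case (b) with last rule $\ImpLImp$ and principal formula $(\varphi\to\psi)\to\chi\in\Pi$. The left premise is $\Pi',\psi\to\chi,\varphi,\Gamma\To\Delta,\psi$.
\begin{itemize}
\item Case (a) of the induction hypothesis does not apply, since $\Delta$ may contain $p$.
\item Case (b) applies only with the whole succedent $\Delta,\psi$, giving $\Pi',\psi\to\chi,\varphi,\EpKM{\Gamma}\To\ApKM{\Gamma\To\Delta,\psi}$.
\end{itemize}
Re-applying $\ImpLImp$, however, needs $\Pi',\psi\to\chi,\varphi,\EpKM{\Gamma}\To\ApKM{\Gamma\To\Delta},\psi$. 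Getting this from the premise would be an instance of the constant-domain strengthening of 3.(b) in Remark~\ref{rmk:constantdomain}, which the definition deliberately does not provide. The paper treats exactly this as the first subcase of the main difficulty:
\begin{itemize}
\item apply Proposition~\ref{prop:ImpLImprev} to the whole given proof to get $\Pi',\psi\to\chi,\varphi,\Gamma\To\Delta$, which is $\ordering$-smaller;
\item apply (b) to it with the enlarged $p$-free context, and weaken $\psi$ on the right;
\item handle the middle premise (succedent $\psi$, which is $p$-free) with (a), after replacing $\EpKM{\Gamma}$ by $\Box\EpKM{\openboxes\Gamma}$;
\item handle the right premise with (b).
\end{itemize}

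So the obstruction is not tied to where the principal formula sits. It comes from the left premise of $\ImpLImp$ adding a formula to a succedent. For the same reason, your generic treatment of principal formulas in $\Gamma$ also breaks in case (a) for $\ImpLImp$. To discharge the disjunct of $(\Ep8)$ coming from the left premise, namely $\EpKM{\Gamma',\delta_2\to\delta_3,\delta_1}$, you need $\vdash\Pi,\Gamma',\delta_2\to\delta_3,\delta_1\To\Sigma$. (As printed, the table has $\delta_1\to\delta_3$ here; this looks like a typo, since with it implication already fails for $\Gamma=\{(q\to r)\to s\}$.) But that premise carries $\delta_2$, which may contain $p$, in its succedent. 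Again it is Proposition~\ref{prop:ImpLImprev}, not the premise, that supplies the needed sequent.
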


\begin{proof}
It suffices to show that Definition~\ref{def:EAKM} provides us with uniform interpolants for $\GfourKM$, i.e.~prove the three properties of Definition~\ref{def:SUIP}.
The \pfreeness~property~(\rocqdoc{KM.Sequent.PropQuantifiers.html\#PropQuantProp.EA_vars}) is easily checked and the proof of \implication~(\rocqdoc{KM.Sequent.PropQuantifiers.html\#PropQuantProp.entail_correct}) does not differ much from the literature.
The main new difficulty is the proof of \uniformity~(\rocqdoc{KM.Sequent.PropQuantifiers.html\#PropQuantProp.pq_correct}), which goes by well-founded induction, in particular
in the case (b) of $\ImpLImp$:
we are given a $p$-free $\Pi$ and a proof $\pi$ of $\Pi,\Gamma\To\Delta$ whose last rule applied is~$\ImpLImp$
with principal formula $(\varphi\to\psi)\to\chi$, which can either belong to~$\Pi$ or~$\Gamma$.%
\begin{description}
    \item[If $(\varphi\to\psi)\to\chi\in\Pi$] then $\pi$ has the following shape.\\
\begin{small}
$
\infer[\ImpLImp]{\Pi',(\varphi\to\psi)\to\chi,\Gamma\To\Delta}{
    \deduce{\Pi',\psi\to\chi,\varphi,\Gamma\To\Delta, \psi}{\pi_1}
    &
    \deduce{\openboxes\Pi',\psi\to\chi,\varphi,\openboxes\Gamma\To\psi}{\pi_2}
    &
    \deduce{\Pi',\chi,\Gamma\To\Delta}{\pi_3}
}
$
\end{small}

We now need to prove the sequent $\Pi', (\varphi\to\psi)\to\chi,\EpKM{\Gamma}\To\ApKM{\Gamma\To\Delta}$.
For brevity, denote $\EpKM{\Gamma}$ by $\mathsf{E}^0_{p}$
and $\ApKM{\Gamma\To\Delta}$ by $\mathsf{A}^0_{p}$.\\
\begin{small}
$
\infer[\ImpLImp]{\Pi',(\varphi\to\psi)\to\chi, \mathsf{E}^0_{p}\To\mathsf{A}^0_{p}}{
    \Pi', \psi\to\chi,\varphi, \mathsf{E}^0_{p}\To\mathsf{A}^0_{p},\psi
    &
    \openboxes\Pi',\psi\to\chi,\varphi, \openboxes\mathsf{E}^0_{p}\To\psi
    &
    \Pi',\chi, \mathsf{E}^0_{p}\To\mathsf{A}^0_{p}
}
$
\end{small}\\
The middle premise is obtained by noticing that $\EpKM{\Gamma}$ implies $\Box\EpKM{\openboxes\Gamma}$
 using rule $(\Ep9)$ and then applying (IH) on $\pi_2$. 
 The rightmost premise is a direct application of (IH) on~$\pi_3$.
The leftmost premise however, cannot be derived from (IH) on $\pi_1$.
Indeed, case (a) is not applicable as $\Delta$ is not necessarily $p$-free; and case (b) only proves
$\Pi', \psi\to\chi,\varphi,\EpKM{\Gamma}\To\ApKM{\Gamma\To\Delta, \psi}$.
This is where Proposition~\ref{prop:ImpLImprev} crucially plays a role: from~$\pi$, we can obtain a proof $\pi_1'$ of 
$\Pi',\psi\to\chi,\varphi,\Gamma\To\Delta$. 
This allows the first premise to be proved by weakening $\psi$ on the right and applying the induction hypothesis (IH) on $\pi_1'$, whose end sequent is smaller than that of $\pi$.

\item[If $(\varphi\to\psi)\to\chi\in\Gamma$] then $\pi$ has the following shape.\\
\begin{small}
$
\infer[\ImpLImp]{\Pi, \Gamma,(\varphi\to\psi)\to\chi\To\Delta}{
    \deduce{\Pi,\Gamma',\psi\to\chi,\varphi\To\Delta, \psi}{\pi_1}
    &
    \deduce{\openboxes\Pi,\openboxes\Gamma',\psi\to\chi, \varphi\To\psi}{\pi_2}
    &
    \deduce{\Pi,\Gamma', \chi\To\Delta}{\pi_3}
}$\end{small}\\
Again, Proposition~\ref{prop:ImpLImprev} entails that there is a proof $\pi_1'$ of 
$\Pi,\Gamma',\psi\to\chi,\varphi\To\Delta$.

We now need to prove the sequent $\Pi,\EpKM{\Gamma',(\varphi\to\psi)\to\chi\To\Delta}\To\ApKM{\Gamma', (\varphi\to\psi)\to\chi\To\Delta}$. After weakening on the left, we notice an instance of $(\Ap8)$ on the right, producing three formulas which easily follow using $\ImpR$ and $\BoxR$ and the induction hypothesis
on $\pi_1'$, $\pi_2$ and $\pi_3$:\vspace{-0.7em}
\begin{gather*}
(1) \vdash\Pi,\EpKM{\Gamma',\psi\to\chi,\varphi\To\Delta}\To\ApKM{\Gamma',\psi\to\chi,\varphi\To\Delta}
\\
(2) \vdash\openboxes\Pi,\EpKM{\openboxes\Gamma',\psi\to\chi,\varphi\To\psi}\To\ApKM{\openboxes\Gamma',\psi\to\chi,\varphi\To\psi}
\\
(3) \vdash\Pi,\EpKM{\Gamma', \chi\To\Delta}\To\ApKM{\Gamma', \chi\To\Delta}.
\end{gather*}
\vspace{-3em}
\end{description}
\end{proof}
\vspace{-2em}

%% file: Coherence.tex
Uniform interpolation for $\KM$ has a direct consequence on the class
of algebras corresponding to this logic: \emph{coherence}.

\begin{definition}[\rocqdoc{KM.Algebra.KM_Algebras.html\#KMalg}]\label{def:CK-alg}
  A KM-algebra is a Heyting algebra $(A,\bot,\land,\lor,\to)$ together with a unary operator
  $\Box : A \to A$ satisfying\\
  $
    \Box\top = \top \quad
    \Box a \land \Box b = \Box(a \land b) \quad
    a\leq \Box a \quad
    \Box a \to a = a \quad
    \Box a \leq b \lor (b \to a)
  $\\
  where $\top:=\bot\to\bot$ and $a\leq b$ is notation for $a = a \land b$.
  We denote the class of KM-algebras by $\KMAlg$.
\end{definition}
Formulas are evaluated over KM-algebras via the notion of interpretation.

\begin{definition}[\rocqdoc{KM.Algebra.algebraic_semantic.html}]
  Let $A$ be a KM-algebra.
  A \emph{valuation} on $A$ is a function $\val:\PropVar\to A$.
  Given a valuation $\val$ over $A$ and a formula $\phi$,
  we recursively define the \emph{interpretation $\int{\val}(\phi)$ of $\phi$ 
  in $A$ via $\val$} as follows, where $\star\in\{\land,\lor,\to\}$: 
  \vspace{-0.39cm}
  \begin{center}
  \begin{tabular}{l c l @{\hspace{0.8cm}} l c l @{\hspace{0.8cm}} l c l @{\hspace{0.8cm}} l c l}
  $\int{\val}(p)$ & $:=$ & $\val(p)$ & $\int{\val}(\bot)$ & $:=$ & $\bot$ & $\int{\val}(\phi_1\star\phi_2)$ & $:=$ & $\int{\val}(\psi)\star\int{\val}(\chi)$ &
    $\int{\val}(\Box\psi)$ & $:=$ & $\Box\int{\val}(\psi)$ \\
  \end{tabular}
  \end{center}
  We define the algebraic entailment as follows:
    \begin{center}
    \begin{tabular}{l@{\hspace{1cm}}c@{\hspace{1cm}}l}
    $\Gamma\models_{\mathcal A}\varphi$ & iff & $\forall A\in\KMAlg.\,\forall\val.\;\;[\forall\gamma\in\Gamma.\;\int{\val}(\gamma)=\top]\;\Rightarrow\;\int{\val}(\phi)=\top$\\
    \end{tabular}
    \end{center}
\end{definition}
In contrast to the Kripke case, we have that $\KM$ coincides with its algebraic semantics~\cite[Proposition 3.7]{Mur14}: $\Gamma\vdash_{\KMH}\phi$ if and only if $\Gamma\models_{\mathcal A}\phi$ (\rocqdoc{KM.Algebra.KMH_algebraizable.html\#KMH_Alg1}).
\vspace{-0.05cm}
In fact, this coincidence is extremely strong: $\KM$ is \emph{finitely algebraisable} over KM-algebras (\rocqdoc{KM.Algebra.KMH_algebraizable.html\#KMH_Alg1},\rocqdoc{KM.Algebra.KMH_algebraizable.html\#KMH_Alg2},\rocqdoc{KM.Algebra.KMH_algebraizable.html\#KMH_Alg3},\rocqdoc{KM.Algebra.KMH_algebraizable.html\#KMH_Alg4})
in the sense of abstract algebraic logic~\cite{Fon16},
with $\{x=\top\}$ as defining equations and $\{x\leftrightarrow y\}$ as equivalence formulas.

A so-called \emph{bridge theorem}~\cite[Section 3.6]{Fon16} leverages algebraisability to
inform us that if $\KM$ satisfies uniform interpolation, then $\KMAlg$ satisfies the property of coherence~\cite{GooMetTsi2017,KowMet2019}:
any finitely generated subalgebra of a finitely presented member of $\KMAlg$ is finitely presented.
Consequently, Theorem~\ref{thm:KMSUIP} and the algebraisability of $\KM$ give us
the coherence of $\KMAlg$, a novel result.

\begin{corollary}
The class $\KMAlg$ of KM-algebras is coherent.
\end{corollary}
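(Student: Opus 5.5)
The corollary follows by chaining three ingredients: Theorem~\ref{thm:KMSUIP}, the finite algebraisability of $\KM$ over $\KMAlg$ recalled above, and the bridge theorem of Kowalski and Metcalfe~\cite{KowMet2019}.

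The first step is to fix the precise form of the bridge theorem. In the setting of~\cite{KowMet2019}, let $\mathsf{V}$ be a variety that is the equivalent algebraic semantics of a finitary logic $\sys{L}$. Suppose $\sys{L}$ has \emph{deductive} (right) uniform interpolation, or suitably the equational consequence of $\mathsf{V}$ has right uniform interpolants. Then $\mathsf{V}$ is coherent. So I would check three things:
\begin{itemize}
\item $\KMAlg$ is a variety. This is immediate from Definition~\ref{def:CK-alg}, since it is axiomatised by equations over Heyting algebras.
\item $\KMAlg$ is the equivalent algebraic semantics of $\KMH$, with defining equation $x=\top$ and equivalence formula $x\leftrightarrow y$. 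This is the algebraisability result already stated.
\item $\KMH$ has the right kind of interpolation for \emph{consequence}, not only for theorems.
\end{itemize}

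The third item is where I expect the work to lie. Definition~\ref{def:UIP} concerns theorems, whereas coherence needs the consecution-level statement: for finite $\Gamma$ and $\phi$ there is a $p$-free $\exists p\Gamma$ with $\Gamma\vdash_{\KMH}\exists p\Gamma$, and $\exists p\Gamma\vdash_{\KMH}\psi$ whenever $\Gamma\vdash_{\KMH}\psi$ for $p$-free $\psi$. Here $p$ stands for a finite set of variables, obtained by iterating the single-variable case. I would derive this from the sequent-level property. Take $\exists p\Gamma:=\EpKM{\Gamma}$. Implication (a) of Definition~\ref{def:SUIP} gives $\Gamma\Ra\EpKM{\Gamma}$, and uniformity (a) with $\Pi=\emptyset$, $\Sigma=\{\psi\}$ gives $\EpKM{\Gamma}\Ra\psi$. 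These transfer to $\KMH$ by the soundness and completeness theorem for $\GfourKM$.

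A subtlety is that $\KMH$ consecutions do not validate the deduction theorem for arbitrary rules. For example, $\phi\vdash\Box\phi$ holds by $\sysrule{Nec}$, but here this is harmless because CP makes $\phi\to\Box\phi$ a theorem. So the $\KMH$ deduction theorem, already used in the paper, lets me pass freely between consecutions and implications. This gives the deductive interpolation form required by the bridge theorem.

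Finally, I would translate this algebraically, in case the cited theorem is phrased in terms of compact congruences or finitely presented algebras. For a finitely presented $A=F(\overline{x},\overline{y})/\theta$, where $\theta$ is generated by finitely many equations $\gamma=\top$, and the subalgebra generated by the images of $\overline{x}$, the restriction of $\theta$ to $F(\overline{x})$ is generated by the single equation $\EpKM{\Gamma}=\top$, where $\Gamma$ is the finite set of those $\gamma$ and $\overline{y}$ is eliminated one variable at a time. This argument works because the logic has a deduction theorem and equivalence formulas, so congruences correspond to theories. It shows the subalgebra is finitely presented, hence $\KMAlg$ is coherent. The main obstacle is matching hypotheses precisely with~\cite{KowMet2019}, namely deductive versus theorem-level interpolation. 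I would resolve this via the sequent-level uniformity (a) as above, rather than via Definition~\ref{def:UIP} alone.
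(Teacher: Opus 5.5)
Your proposal is correct and takes the same route as the paper: the paper simply chains Theorem~\ref{thm:KMSUIP}, the finite algebraisability of $\KM$ over $\KMAlg$, and the bridge theorem of~\cite{KowMet2019,Fon16}. Your extra work makes explicit what the paper leaves to the cited theorem, namely obtaining deductive interpolation from implication~(a) and uniformity~(a) (or equivalently from the deduction theorem), iterating over finitely many eliminated variables, and the congruence-restriction argument. In that last step you should also state the standard reduction to subalgebras generated by a subset of the generators, obtained by adding new variables $x_i$ with equations $x_i\leftrightarrow t_i(\overline{y})=\top$.
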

\vspace{-1em}

%% file: conclusion.tex
Our novel multi-succedent calculus $\GfourKM$ helped us establish uniform interpolation for $\KM$
via a porting of Pitts' technique to the multi-succedent setting.
Some difficulties had to be overcome on the way:
the semantically inspired modifications made to the rule $\ImpR$, and
the generalisation of the definition of uniform interpolation to multi-succedent sequents in an intuitionistic setting which avoids capturing the constant-domain universal quantifier.
We also drew a direct algebraic consequence of uniform interpolation by
showing that coherence holds of the class of KM-algebras.
To ensure the highest level of trust in our results,
we mechanised them all in the proof assistant Rocq.

In future work, we intend to leverage the insights gained in this paper,
both in the design of $\GfourKM$ and
in the adaptation of Pitts' technique, 
to show uniform interpolation for relevant logics:
the G\"{o}del-Dummett logic $\LC$ and
the combination of $\KM$ and $\LC$, called $\KM_{lin}$.
While uniform interpolation is only open for the latter and
not the former~\cite{Mak77,GhiZaw97},
applying the adaptation of Pitts' technique to $\LC$
can still be of interest: 
the size or production time of interpolants~\cite{CatKuiWol25} may be smaller using
the technique employed here than 
with the usual procedure exploiting the locally finiteness
and Craig interpolation of $\LC$.